\documentclass[11pt,a4paper,leqno]{amsart}

\usepackage[utf8]{inputenc}
\usepackage[english]{babel}
\usepackage{amsmath,amsfonts,amssymb,amsthm}
\usepackage{xcolor}
\usepackage{tabularx}
\usepackage{tikz}
\usepackage{floatrow}
\usepackage{multicol} 
\usepackage{pgfplots}
\usepackage{enumitem}

\theoremstyle{plain}
\newtheorem{theorem}{Theorem}[section]
\newtheorem*{theorem*}{Theorem}
\newtheorem{corollaire}[theorem]{Corollary}
\newtheorem{lemme}[theorem]{Lemma}

\theoremstyle{definition}

\newtheorem{rmq}[theorem]{Remark}

\makeatletter

\@addtoreset{equation}{section}
\makeatother

\makeatletter
\renewcommand{\thefigure}{\thesection.\@arabic\c@figure}
\@addtoreset{figure}{section}
\makeatother

\newcommand{\joinf}{\lfloor j_0\rfloor}
\newcommand{\bin}[1]{\mathcal{B}_{#1}}
\newcommand{\binc}[1]{\mathcal{B}^*_{#1}}


\begin{document}

\author{Karl Grosse-Erdmann}
\dedicatory{Université de Mons}
\address{Institut Complexys, Département de Mathématique,
Université de Mons, 20~Place du Parc, 7000 Mons, Belgium.}
\email{kg.grosse-erdmann@umons.ac.be}

\author{Fabien Heuwelyckx}
\dedicatory{Université de Mons}
\address{Institut Complexys, Département de Mathématique,
Université de Mons, 20~Place du Parc, 7000 Mons, Belgium.}
\email{fabien.heuwelyckx@umons.ac.be}

\title[Lookback options and binomial approximation]{The pricing of
  lookback options and binomial approximation}

\subjclass[2010]{Primary 91G20; Secondary 34E05, 91G60}

\thanks{The authors have been supported by a grant from the National
  Bank of Belgium (BNB)}

\keywords{lookback option, Cheuk-Vorst model, pricing, binomial
  cumulative distribution function, asymptotic expansion}

\begin{abstract}
  Refining a discrete model of Cheuk and Vorst we obtain a closed
  formula for the price of a European lookback option at any time
  between emission and maturity. We derive an asymptotic expansion of
  the price as the number of periods tends to infinity, thereby
  solving a problem posed by Lin and Palmer. We prove, in particular,
  that the price in the discrete model tends to the price in the
  continuous Black-Scholes model. Our results are based on an
  asymptotic expansion of the binomial cumulative distribution
  function that improves several recent results in the literature.
\end{abstract}

\maketitle

\section{Introduction}

Cheuk and Vorst~\cite{cheuk_vorst} have proposed a discrete model for
the pricing of European lookback options with floating strike; they
suppose implicitly that they evaluate the price at emission. In this
paper we will refine their model in order to price the option at any
given time after emission, and we derive an asymptotic expansion for
the price as the number of time intervals tends to infinity. This
solves completely a problem posed by Lin and Palmer~\cite{lin_palmer};
in the special case of the price at emission the problem has already
been treated by the second author~\cite{heuw1}.

Lookback options give the holder the right to buy (for the call),
respectively to sell (for the put), the underlying asset at maturity
for its lowest, respectively highest, price during its lifetime. The
payoff functions at maturity are therefore given by
\begin{equation*}
S_T-\min_{t\leq T}S_t\quad
\text{and}\quad
\max_{t\leq T}S_t-S_T,
\end{equation*}
respectively.

In the traditional continuous model (i.e., when the underlying asset
price follows a Wiener process with drift, as proposed by Black and
Scholes~\cite{BS} and Merton~\cite{merton}), Goldman, Sosin and
Gatto~\cite{gatto_goldman_sosin} derived a formula for the price under
the assumption that the spot rate $r$ is non-zero. Babbs~\cite{babbs}
obtained the price for $r=0$ by passing to the limit.

A discrete model for the price of a lookback option was proposed by
Hull and White~\cite{hull_white}, see also Hull \cite{hull}, who based
it on the familiar binomial model of Cox, Ross and
Rubinstein~(CRR)~\cite{CRR}, see Figure~\ref{fig-crr}.  However, since
lookback options are path-dependent, Hull and White had to subdivide
each node into different states. This problem was overcome by Cheuk
and Vorst~\cite{cheuk_vorst} who proposed an equivalent tree (CV) in
which every node corresponds to a single state, see
Figure~\ref{fig-cv} (for a call).

{\setlength{\captionmargin}{32.5pt}
\begin{figure}[!ht]
\begin{floatrow}
\ffigbox{\begin{tikzpicture}[scale=1.5]
\draw (0,0) -- (1,0.25) -- (2,0.5) -- (3,0.75);
\draw (0,0) -- (1,-0.25) -- (2,-0.5) -- (3,-0.75);
\draw (1,0.25) -- (2,0) -- (3,-0.25);
\draw (2,0.5) -- (3,0.25);
\draw (1,-0.25) -- (2,0) -- (3,0.25);
\draw (2,-0.5) -- (3,-0.25);
\filldraw (0,0) circle(1.25pt);
\filldraw (1,-0.25) circle(1.25pt);
\filldraw (1,0.25) circle(1.25pt);
\filldraw (2,-0.5) circle(1.25pt);
\filldraw (2,0) circle(1.25pt);
\filldraw (2,0.5) circle(1.25pt);
\filldraw (3,-0.75) circle(1.25pt);
\filldraw (3,-0.25) circle(1.25pt);
\filldraw (3,0.25) circle(1.25pt);
\filldraw (3,0.75) circle(1.25pt);
\end{tikzpicture}}{
\caption{CRR tree with $n=3$}\label{fig-crr}}
\ffigbox{\begin{tikzpicture}[scale=1.5]
\draw (0,0) -- (1,0) -- (2,0) -- (3,0);
\draw (0,0) -- (1,0.5) -- (2,1) -- (3,1.5);
\draw (1,0) -- (2,0.5) -- (3,1);
\draw (1,0.5) -- (2,0) -- (3,0.5);
\draw (2,0.5) -- (3,0);
\draw (2,1) -- (3,0.5);
\filldraw (0,0) circle(1.25pt);
\filldraw (1,0) circle(1.25pt);
\filldraw (1,0.5) circle(1.25pt);
\filldraw (2,0) circle(1.25pt);
\filldraw (2,0.5) circle(1.25pt);
\filldraw (2,1) circle(1.25pt);
\filldraw (3,0) circle(1.25pt);
\filldraw (3,0.5) circle(1.25pt);
\filldraw (3,1) circle(1.25pt);
\filldraw (3,1.5) circle(1.25pt);
\end{tikzpicture}}{
\caption{CV tree for the call with $n=3$}\label{fig-cv}}
\end{floatrow}
\end{figure}}

Neither Hull and White, nor Cheuk and Vorst obtained a closed formula
for the prices in their model. Such a formula was first obtained by
F\"ollmer and Schied \cite{follmer_schied} and, by a different method,
by the second author~\cite{heuw1} (see also the discussion in
\cite[Appendix A]{heuw1}). However, all these papers only evaluate the
price of the option at emission.

The aim of this paper is two-fold. In Section 2 we will derive a
closed formula, in the discrete model, for the price at any time after
emission of a European lookback option with floating strike. In
Section 4 we show that the price in the discrete model converges to
the price in the continuous model, and we derive an asymptotic
expansion. This answers a problem posed by Lin and
Palmer~\cite{lin_palmer} and generalizes the earlier results in
\cite{heuw1}. In order to derive these asymptotics we need a
refinement of the known asymptotic expansions of the binomial
cumulative distribution function, which will be achieved in
Section~3. The final section is devoted to numerical examples.

\section{Lookback options with floating strike}

In the discrete model of Cheuk and Vorst, the traditional CRR tree,
see Figure~\ref{fig-crr}, is still used for modelling the evaluation
of the underlying.

We first introduce the usual notations for the various parameters:
$T$~is the time from emission to maturity, $t$ with $0\leq t<T$ is the
present time, $\tau=T-t$~is the remaining time until maturity, and
$S_t$~is the value of the underlying at time~$t$. Moreover, $r\geq
0$~is the spot rate and $\sigma>0$~is the volatility of the underlying
asset.

Now, the binomial tree for the underlying is only built for the time
interval $[t,T]$ (the prices before $t$ being known). This interval is
divided equally into $n$ subintervals. At each node, the price may
either increase by a factor $u$ or it may decrease by a factor $d$,
where
\begin{equation*}
u=e^{\sigma\sqrt{\tau/n}}\quad
\text{and}\quad
d=u^{-1}=e^{-\sigma\sqrt{\tau/n}}.
\end{equation*}
The probability of an increase is given by
\begin{equation}\label{pe}
p=\frac{e^{r\tau/n}-d}{u-d};
\end{equation}
we take $n$ sufficiently large so that $0< p< 1$.

Let us first consider $t=0$, and let us concentrate on call
options. For pricing a lookback option, Cheuk and Vorst have
introduced a second tree (CV), see Figure \ref{fig-cv}. In the CV
tree, the level $j$ at time $t_m$ ($0\leq m\leq n$) denotes the
difference in powers of~$u$ between the value $S_{t_m}$ at time $t_m$
and the lowest value of the underlying since emission. In other words,
$j$ is the non-negative integer such that
\[
S_{t_m}=\big(\min_{t^*\leq t_m}S_{t^*}\big)u^j.
\]
As shown in~\cite{cheuk_vorst} and made explicit in \cite[Theorem
2.1]{heuw1}, the price of the option at emission is then given by
\[
C^{fl}_n(0) = S_0 \sum_{j=0}^n (1-u^{-j})\sum_{k=0}^{n} \Lambda_{j,k,n} q^k(1-q)^{n-k},
\]
where
\begin{equation}\label{qu}
q=pu e^{-r\tau/n}=\frac{u-e^{-r\tau/n}}{u-d} 
\end{equation}
(which lies in $(0,1)$ if $p$ does) and $\Lambda_{j,k,n}$ is the
number of paths in the CV tree from the initial node $(0,0)$ to level
$j$ at maturity that have exactly $k$~upward jumps. It was shown in
\cite[Theorem 2.1]{heuw1} that
\begin{equation*}
\Lambda_{j,k,n}=
\begin{cases} 
\binom{n}{k-j}-\binom{n}{k-j-1}
	& \text{if $j\leq k\leq\lfloor\frac{n+j}{2}\rfloor$},\\
0   & \text{else.}
\end{cases}
\end{equation*}

In this paper we are interested in finding the price of a lookback
(call) option at any given time $t$ $(0\leq t<T)$. In that case, the
market prices $S_{t^*}$, $0\leq t^*\leq t$, of the underlying between
emission and time $t$ are known. Of course, the price at time~$t$ is
not necessarily the minimal value; there is in fact an initial level
$j_0$ that is defined by
\begin{equation*}
S_{t}=\big(\min_{t^*\leq t}S_{t^*}\big)u^{j_0}.
\end{equation*}
In the sequel we will write for brevity
\[
M_{t}= \min_{t^*\leq t}S_{t^*}.
\]
Then we have 
\begin{equation}\label{j0}
j_0=\frac{\log (S_{t}/M_{t})}{\sigma\sqrt{\tau/n}}.
\end{equation}
Clearly, $j_0\geq 0$ is not necessarily an integer, and the initial
node is located at position~$(0,j_0)$. This leads to a modified CV
tree (see Figure~\ref{fig-cheuk}, where $m$ corresponds to time
$t_m$). Note that after sufficiently many downward jumps one reaches a
new minimum and thus level 0. From there on, all levels are integers.

It can be shown as in \cite{cheuk_vorst} that
\begin{equation}\label{price}
C^{fl}_n(t) = S_{t} \sum_{j\in J} (1-u^{-j})\sum_{k=0}^n \Lambda^{j_0}_{j,k,n} q^k(1-q)^{n-k},
\end{equation}
where $J$ is the set of possible levels at maturity, $q$ is given by
\eqref{qu} and, as before, $\Lambda^{j_0}_{j,k,n}$ is the number of
paths from the initial node $(0,j_0)$ to level $j$ at maturity that
have exactly $k$ upward jumps. It remains to evaluate these numbers.

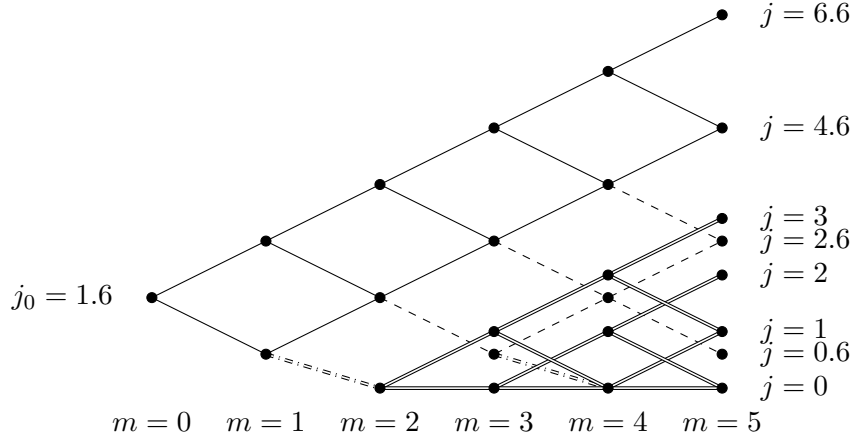
\begin{figure}[ht!]
\begin{center}
\begin{tikzpicture}[scale=0.75]
\draw (0,1.6) -- (2,2.6) -- (4,3.6) -- (6,4.6) -- (8,5.6) -- (10,6.6);
\draw (0,1.6) -- (2,0.6) -- (4,1.6) -- (6,2.6) -- (8,3.6) -- (10,4.6);
\draw (2,2.6) -- (4,1.6);
\draw (4,3.6) -- (6,2.6);
\draw (6,4.6) -- (8,3.6);
\draw (8,5.6) -- (10,4.6);
\draw[dashed] (4,1.6) -- (6,0.6) -- (8,1.6) -- (10,2.6);
\draw[dashed] (6,2.6) -- (8,1.6) -- (10,0.6);
\draw[dashed] (8,3.6) -- (10,2.6);
\draw[double,dashdotted] (2,0.6) -- (4,0);
\draw[double,dashdotted] (6,0.6) -- (8,0);
\draw[double] (4,0) -- (6,1) -- (8,2) -- (10,3);
\draw[double] (4,0) -- (6,0) -- (8,1) -- (10,2);
\draw[double] (6,1) -- (8,0) -- (10,1);
\draw[double] (8,2) -- (10,1);
\draw[double] (8,1) -- (10,0);
\draw[double] (6,0) -- (8,0) -- (10,0);
\filldraw (0,1.6) circle(2.5pt);
\filldraw (2,2.6) circle(2.5pt);
\filldraw (2,0.6) circle(2.5pt);
\filldraw (4,3.6) circle(2.5pt);
\filldraw (4,1.6) circle(2.5pt);
\filldraw (6,4.6) circle(2.5pt);
\filldraw (6,2.6) circle(2.5pt);
\filldraw (6,0.6) circle(2.5pt);
\filldraw (8,5.6) circle(2.5pt);
\filldraw (8,3.6) circle(2.5pt);
\filldraw (8,1.6) circle(2.5pt);
\filldraw (10,6.6) circle(2.5pt);
\filldraw (10,4.6) circle(2.5pt);
\filldraw (10,2.6) circle(2.5pt);
\filldraw (10,0.6) circle(2.5pt);
\filldraw (4,0) circle(2.5pt);
\filldraw (6,0) circle(2.5pt);
\filldraw (6,1) circle(2.5pt);
\filldraw (8,0) circle(2.5pt);
\filldraw (8,1) circle(2.5pt);
\filldraw (8,2) circle(2.5pt);
\filldraw (10,0) circle(2.5pt);
\filldraw (10,1) circle(2.5pt);
\filldraw (10,2) circle(2.5pt);
\filldraw (10,3) circle(2.5pt);
\node[left] at (-0.5,1.6){$j_0=1.6$};
\node[right] at (10.5,6.6){$j=6.6$};
\node[right] at (10.5,4.6){$j=4.6$};
\node[right] at (10.5,3){$j=3$};
\node[right] at (10.5,2.6){$j=2.6$};
\node[right] at (10.5,2){$j=2$};
\node[right] at (10.5,1){$j=1$};
\node[right] at (10.5,0.6){$j=0.6$};
\node[right] at (10.5,0){$j=0$};
\node[below] at (0,-0.25){$m=0$};
\node[below] at (2,-0.25){$m=1$};
\node[below] at (4,-0.25){$m=2$};
\node[below] at (6,-0.25){$m=3$};
\node[below] at (8,-0.25){$m=4$};
\node[below] at (10,-0.25){$m=5$};
\end{tikzpicture}
\caption{CV tree for the call with $n=5$ and $j_0=1.6$}
\label{fig-cheuk}
\end{center}
\end{figure}

Put abstractly, we have the following problem. Let $j_0\geq 0$ be a
positive real number and $n\in\mathbb{N}$. We create a graph with an
initial node at $(0,j_0)$, and from each node $(m,j_m)$ at period $m$,
$0\leq m <n$, we create two nodes at period $m+1$ given by
\[
(m+1,j_m+1)\;\;\textit{up}
\]
and
\[
(m+1, \max( j_m-1,0))\;\;\textit{down},
\]
with connecting edges. Let
\[
\Lambda^{j_0}_{j,k,n}
\]
denote the number of paths from the initial node $(0,j_0)$ to the
final node $(n,j)$ that has exactly $k$ upward jumps. The following
result may then also be of independent interest.

\begin{lemme}\label{lemme}
 Let $j_0\geq 0$ and $n\in\mathbb{N}$. Then
\[
\Lambda^{j_0}_{j,k,n} =
\begin{cases}
\displaystyle\binom{n}{k}, 
&\text{if } j=j_0+2k-n\\[-2mm]
&\text{and } n-\joinf\leq k\leq n,\\[2mm]
\displaystyle\binom{n}{k}-\binom{n}{k+\joinf+1},
&\text{if } j=j_0+2k-n\\[-2mm]
&\text{and } n-\lfloor\frac{n+j_0}{2}\rfloor\leq k\leq n-\joinf-1,\\[2mm]
\displaystyle\binom{n}{k-j}-\binom{n}{k-j-1},
&\text{if }0\leq j\leq n-\joinf-1\\[-2mm]
&\text{and } j\leq k\leq \lfloor\frac{n-\joinf-1+j}{2}\rfloor.
\end{cases}
\]
All the other values of $\Lambda^{j_0}_{j,k,n}$ are zero.
\end{lemme}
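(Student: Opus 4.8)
The plan is to translate each path in the modified CV tree into an ordinary $\pm1$ walk and to read off the level from its running minimum. Encode a path by $S_0=0$ together with $S_m-S_{m-1}=+1$ for an upward and $-1$ for a downward jump, so that a path with $k$ upward jumps has $S_n=2k-n$. Going back to the defining relation $S_{t_m}=(\min_{t^*\le t_m}S_{t^*})u^{\,\mathrm{level}}$ and writing the underlying as $M_t\,u^{\,j_0+S_m}$, the overall running minimum of the underlying equals $M_t\,u^{\min(0,\,j_0+\min_{0\le i\le m}S_i)}$, whence the level at period $m$ is
\[
L_m=(j_0+S_m)-\min\!\Big(0,\;j_0+\min_{0\le i\le m}S_i\Big).
\]
This yields the dichotomy that drives the whole proof: as long as the walk stays above the barrier, i.e.\ $\min_{0\le i\le m}S_i\ge-\joinf$ (equivalently the walk never attains $-(\joinf+1)$), the level is the non-integer number $L_m=j_0+S_m$; and from the first moment the walk reaches $-(\joinf+1)$ --- which is exactly when the path first reaches level~$0$ --- one has $L_m=S_m-\min_{0\le i\le m}S_i$, the height of the walk above its running minimum, which is an integer.

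I would then split the paths according to this dichotomy. A path that never reaches $0$ ends at the non-integer level $j=j_0+2k-n$, and the relevant count is the number of $\pm1$ walks of length $n$ with $k$ upward steps that stay $\ge-\joinf$ throughout. By Andr\'e's reflection principle across the level $-(\joinf+1)$ this equals $\binom{n}{k}-\binom{n}{k+\joinf+1}$: reflecting the tail of an offending walk after its first visit to $-(\joinf+1)$ sends the endpoint $2k-n$ to $n-2k-2\joinf-2$, so the offending walks are counted by $\binom{n}{k+\joinf+1}$. When $k\ge n-\joinf$ the second binomial coefficient vanishes, which is Case~1; for $n-\lfloor\frac{n+j_0}{2}\rfloor\le k\le n-\joinf-1$ it does not, which is Case~2. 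The lower bound is precisely the requirement $j=j_0+2k-n>0$ needed for the endpoint to lie above the barrier, so that the reflection is valid.

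For a path that does reach $0$ the level at maturity is the integer $j=S_n-\min_{0\le i\le n}S_i$, so that the running minimum is pinned down by $\min_{0\le i\le n}S_i=2k-n-j$. Consequently the number of such paths with prescribed $j$ and $k$ equals the number of $\pm1$ walks of length $n$ with $k$ upward steps whose height above the running minimum is $j$; but this is exactly the quantity already recalled for the case $t=0$, namely $\binom{n}{k-j}-\binom{n}{k-j-1}$ (see \cite[Theorem~2.1]{heuw1}). The one extra constraint here is that the path must actually have reached $0$, i.e.\ $\min_{0\le i\le n}S_i\le-(\joinf+1)$; since the minimum equals $2k-n-j$, this is simply $k\le\lfloor\frac{n-\joinf-1+j}{2}\rfloor$, which together with $j\le k$ is precisely the range of Case~3. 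For larger $k$ the minimum would be $\ge-\joinf$, so the walk stays in the non-integer regime and such paths have already been accounted for in the previous step; this shows both that there is no double counting and that all remaining values of $\Lambda^{j_0}_{j,k,n}$ vanish.

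The heart of the argument is the representation of $L_m$ through the running minimum and, above all, the observation that in the integer regime the minimum is completely determined by the pair $(j,k)$: this single remark simultaneously converts the requirement that the path reach level~$0$ into the clean bound $k\le\lfloor\frac{n-\joinf-1+j}{2}\rfloor$ and lets the ballot number of the $t=0$ problem be imported verbatim. The points that will require genuine care are the behaviour at the interface between the two regimes (the first passage to~$0$, and the degenerate case where the minimum is attained at the terminal time) and the verification that the stated index ranges coincide exactly with the validity ranges of the reflection argument and of the inequality $\min_{0\le i\le n}S_i\le-(\joinf+1)$; these are book-keeping matters once the representation above is in place. The only external inputs are the reflection principle and the $t=0$ formula of \cite[Theorem~2.1]{heuw1}.
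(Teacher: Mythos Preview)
Your proof is correct and takes a genuinely different route from the paper's. The paper argues directly on the modified CV tree: it partitions the terminal nodes into three geometric classes (a full binomial part, a partial binomial part where some paths have been ``lost'', and a smaller Cheuk--Vorst sub-tree rooted at $(\lfloor j_0\rfloor+1,0)$), applies the reflection principle for the second class, and establishes the third by an induction on $n$; the cases $j_0\notin\mathbb Z$ and $j_0\in\mathbb Z$ are then handled in two separate passes. You instead encode the level at every period as $L_m=(j_0+S_m)-\min(0,\,j_0+\min_{i\le m}S_i)$ for a single unconstrained $\pm1$ walk and split only on whether the walk ever attains $-(\lfloor j_0\rfloor+1)$. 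The decisive observation---that once this barrier is crossed the running minimum equals $2k-n-j$ and is therefore pinned down by the pair $(j,k)$---lets you import the $t=0$ ballot number of \cite{heuw1} wholesale, bypassing the paper's induction entirely and treating integer and non-integer $j_0$ uniformly. The paper's argument is more self-contained (it re-derives the Cheuk--Vorst count rather than citing it), while yours is shorter and more conceptual. One small correction: the lower bound in Case~2 is the condition $j=j_0+2k-n\ge0$ (equivalently $2k-n\ge-\lfloor j_0\rfloor$, so that the endpoint lies weakly above the barrier), not $j>0$; when $j_0\in\mathbb Z$ the value $j=0$ is admissible in Case~2, and there you should verify explicitly---as the paper does at the end of its proof---that the $k$-ranges of Cases~2 and~3 remain disjoint.
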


\begin{proof} (I) We will first consider the situation where $j_0$ is
  not an integer, see Figure~\ref{fig-cheuk}. In that case there are
  three disjoint classes of terminal nodes. The proof will be done in
  three steps devoted to the nodes in each of these classes.

  (1) The first class is given by the end-points of simple solid
  lines. These nodes are exactly those for which the set of
  predecessors (going back to $m=0$) is the same as in a
  \textit{traditional binomial tree}. It is then easy to count the
  number of paths. To arrive at such a node there are at least
  $k=n-\joinf$ and at most $k=n$ upward jumps, and there are
  $\binom{n}{k}$ paths with exactly $k$~upward jumps. The
  corresponding terminal levels are $j=j_0+k-(n-k)=j_0+2k-n$. Thus,
\[
\Lambda^{j_0}_{j,k,n}= \binom{n}{k}
\]
for $j=j_0+2k-n$ and $n-\joinf\leq k\leq n$. We note for later use
that the levels $j$ in this class range from $j_0+n-2\joinf$ to
$j_0+n$ in steps of 2, and that the number $k$ of upward jumps is
\begin{equation}\label{upwjumps}
k=\frac{n+j-j_0}{2}.
\end{equation}

(2) The second class is constituted of all the remaining terminal
nodes that belong to the traditional binomial tree starting at
$(0,j_0)$; in Figure~\ref{fig-cheuk} they are indicated by dashed
lines. We have here a \textit{partial binomial tree}: some of the
paths of the traditional binomial tree leading to such nodes have been
lost. Indeed, some paths would have had at least one intermediate node
at level zero. The possible terminal levels range from $\{j_0\}$ (if
$\joinf+n$ is even) or from $\{j_0\}+1$ (if $\joinf+n$ is odd) to
$j_0+n-2\joinf-2$ in steps of two; moreover there are such nodes only
when $n\geq \joinf+2$. We can rewrite the lower bound in both cases as
$j_0+n-2\lfloor \frac{n+j_0}{2}\rfloor$. By \eqref{upwjumps}, each
level corresponds to a unique number $k$ of upward jumps, and the
possible values of $k$ for nodes of the second class range from
$n-\lfloor \frac{n+j_0}{2}\rfloor$ to $n-\joinf-1$.

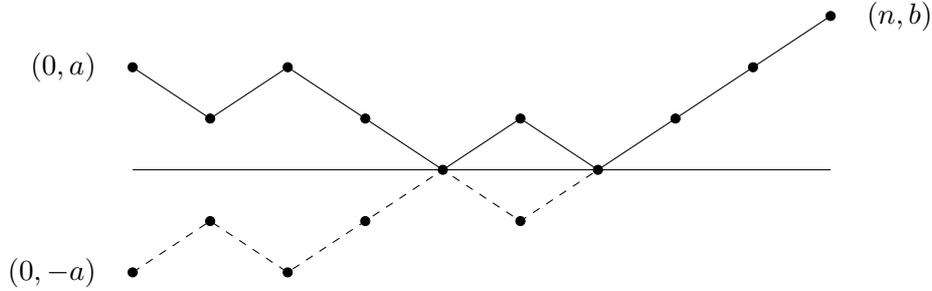
\begin{figure}[ht!]
\begin{center}
\begin{tikzpicture}[scale=0.68]
\draw (0,2) -- (1.5,1) -- (3,2) -- (4.5,1) -- (6,0) -- (7.5,1)-- (9,0)-- (10.5,1)-- (12,2)-- (13.5,3);
\draw (0,0) -- (13.5,0);
\draw[dashed] (0,-2) -- (1.5,-1) -- (3,-2) -- (4.5,-1) -- (6,0) -- (7.5,-1)-- (9,0);
\filldraw (0,2) circle(2.5pt);
\filldraw  (1.5,1) circle(2.5pt);
\filldraw (3,2)circle(2.5pt);
\filldraw  (4.5,1)circle(2.5pt);
\filldraw  (6,0)circle(2.5pt);
\filldraw  (7.5,1)circle(2.5pt);
\filldraw (9,0)circle(2.5pt);
\filldraw (10.5,1)circle(2.5pt);
\filldraw (12,2)circle(2.5pt);
\filldraw (13.5,3)circle(2.5pt);
\filldraw (0,-2)circle(2.5pt);
\filldraw (1.5,-1)circle(2.5pt);
\filldraw (3,-2)circle(2.5pt);
\filldraw (4.5,-1) circle(2.5pt);
\filldraw  (7.5,-1)circle(2.5pt);
\node[left] at (-0.5,2){$(0,a)$};
\node[left] at (-0.5,-2){$(0,-a)$};
\node[right] at (14,3){$(n,b)$};
\end{tikzpicture}
\caption{The reflection principle}
\label{fig-refl}
\end{center}
\end{figure}

Now, the number of paths leading to a node of the second class at
level~$j$ is again~$\binom{n}{k}$, but we have to withdraw all the
\textit{lost paths}. The number of lost paths is obtained by using the
reflection principle, see Figure \ref{fig-refl}. It states that there
is a one-to-one correspondence between the paths in a binomial tree
connecting the nodes~$(0,a)$ and~$(n,b)$ (with $a$,
$b\in\mathbb{N}_0$) in touching or crossing the $x$-axis and the paths
connecting the nodes~$(0,-a)$ and~$(n,b)$. Our case is equivalent to
the situation where $a=\joinf+1$ and $b=j-\{j_0\}+1=j-j_0+\joinf+1$.
It remains to count the number of paths from~$(0,-a)$ to~$(n,b)$. The
sum of the number of upward jumps~($U$) with the number of downward
jumps~($D$) is the number of periods~$n$, while the difference between
$U$ and $D$ is the overall increase, that is, $a+b$. We thus have that
$U=\frac{n+a+b}{2}$. Therefore, the number of lost paths is
\begin{equation*}
\binom{n}{\frac{n+a+b}{2}}
=\binom{n}{\frac{n+2\joinf+j-j_0+2}{2}}
=\binom{n}{k+\joinf+1},
\end{equation*}
where we have used  \eqref{upwjumps}. Thus,
\[
\Lambda^{j_0}_{j,k,n}= \binom{n}{k}\!-\!\binom{n}{k+\joinf +1}
\]
for $j=j_0+2k-n$ and $n-\lfloor \frac{n+j_0}{2}\rfloor\leq k \leq n-\joinf-1$.

(3) The third class of nodes is constituted by the remaining nodes;
they are indicated in Figure~\ref{fig-cheuk} by double lines (either
solid or dashdotted ones). They are the terminal nodes of a smaller
Cheuk-Vorst tree with the initial node at~$(\joinf+1,0)$. Moreover,
there are such nodes only when $n\geq \joinf+1$. The possible
terminal levels range from $j=0$ to $j=n-\joinf-1$.

Unlike in the two previous cases, the paths joining the initial node
$(0,j_0)$ with a terminal level~$j$ will not have a fixed number of
upward jumps, just like in the traditional Cheuk-Vorst tree. We will
prove that the number of paths with exactly $k$~upward jumps arriving
at a level~$j$ in the third class is given by
\begin{equation*}
\Lambda^{j_0}_{j,k,n}=\binom{n}{k-j}-\binom{n}{k-j-1}
\end{equation*}
for $0\leq j\leq n-\joinf-1$ and $j\leq k\leq
\lfloor\frac{n-\joinf-1+j}{2}\rfloor$, and that no other values of $k$
are possible.

We prove the claim by induction on~$n$, $n\geq \joinf+1$. The result
is trivial for $n=\joinf+1$. For $n=\joinf+2$, we can arrive at
level~$j=0$ (with $k=0$) or at $j=1$ (with $k=1$). In both cases,
there is exactly one possible path.

Now let $n\geq \joinf+3$. Again, for $j=n-\joinf-1$ (with
$k=n-\joinf-1$) and $j=n-\joinf-2$ (with $k=n-\joinf-2$) the result is
trivial. In both cases there is only one possible path, which are
respectively $\joinf+1$~downs followed by $n-\joinf-1$~ups and
$\joinf+2$~downs followed by $n-\joinf-2$~ups.

It remains to discuss the case when $0\leq j\leq n-\joinf-3$. If
$j=0$, there are two downward paths leading to it from period $n-1$
if~$n+\joinf$ is even, and three downward paths otherwise.
\begin{enumerate}
\item[(i)] If $n+\joinf$ is even, they come from the nodes located at
  levels~0 or~1 at period $n-1$.\\
  If $k=0$, the only possible path comes from $j=0$ and so
\[
\Lambda^{j_0}_{0,0,n}=\Lambda^{j_0}_{0,0,n-1}=1.
\]
If $1\leq k\leq \frac{n-\joinf-2}{2}$, then we have
\[
\begin{split}
  \Lambda^{j_0}_{0,k,n}&=\Lambda^{j_0}_{0,k,n-1}+\Lambda^{j_0}_{1,k,n-1}\\
  &=\binom{n-1}{k}-\binom{n-1}{k-1}+\binom{n-1}{k-1}-\binom{n-1}{k-2}\\
  &=\binom{n}{k}-\binom{n}{k-1}.
\end{split}
\]
There are no paths for~$k>\frac{n-\joinf-2}{2}$.
\item[(ii)] If $n+\joinf$ is odd, there are additional downward paths
  coming from the partial binomial tree. By \eqref{upwjumps}, these
  paths did exactly $\frac{n-\joinf-1}{2}$ ups before entering the
  Cheuk-Vorst tree. They make therefore no contribution
  if $0\leq k\leq \frac{n-\joinf-3}{2}$, and so we argue as in (i).\\
  If $k=\frac{n-\joinf-1}{2}$, it is impossible to come from
  level~$j=0$ at period $n-1$ and so we have with (2) that
\[
\begin{split}
  \Lambda^{j_0}_{0,k,n}&=\Lambda^{j_0}_{1,k,n-1}+\Lambda^{j_0}_{\{j_0\},k,n-1}\\
  &=\binom{n-1}{k-1}-\binom{n-1}{k-2}+\binom{n-1}{k}-\binom{n-1}{k+\joinf+1}\\
  &=\binom{n-1}{k-1}-\binom{n-1}{k-2}+\binom{n-1}{k}-\binom{n-1}{k-1}\\
  &=\binom{n}{k}-\binom{n}{k-1}.
\end{split}
\]
There are no paths for~$k>\frac{n-\joinf-1}{2}$.
\end{enumerate}

If $1\leq j\leq n-\joinf-3$, there are two paths leading to it from
period $n-1$, one upward path coming from the node located at
level~$j-1$ and one downward path coming from the node located at
level~$j+1$.  If $0\leq k<j$ there are no paths with $k$~ups. If
$k=j$, the only possible path comes from $j-1$ and so
\[
\Lambda^{j_0}_{j,j,n}=\Lambda^{j_0}_{j-1,j-1,n-1}=1.
\]
If $j+1\leq k \leq \lfloor \frac{n-\joinf-1+j}{2}\rfloor$, then we have
\[
\begin{split}
  \Lambda^{j_0}_{j,k,n}&=\Lambda^{j_0}_{j-1,k-1,n-1}+\Lambda^{j_0}_{j+1,k,n-1}\\
  &=\binom{n-1}{k-j}-\binom{n-1}{k-j-1}+\binom{n-1}{k-j-1}-\binom{n-1}{k-j-2}\\
  &=\binom{n}{k-j}-\binom{n}{k-j-1}.
\end{split}
\]
There are no paths for~ $k>\lfloor \frac{n-\joinf-1+j}{2}\rfloor$.

This proves the lemma in the case when $j_0$ is not an integer. 

\begin{figure}[ht!]
\begin{center}
\begin{tikzpicture}[scale=0.75]
\draw (0,1) -- (2,2) -- (4,3) -- (6,4) -- (8,5) -- (10,6);
\draw (0,1) -- (2,0) -- (4,1) -- (6,2) -- (8,3) -- (10,4);
\draw (2,2) -- (4,1);
\draw (4,3) -- (6,2);
\draw (6,4) -- (8,3);
\draw (8,5) -- (10,4);
\draw[dashed] (4,1) -- (6,0);
\draw[dashed] (6,2) -- (8,1);
\draw[dashed] (8,3) -- (10,2);
\draw[ultra thick] (6,0) -- (8,1) -- (10,2);
\draw[ultra thick] (8,1) -- (10,0);
\draw[double,dashdotted] (2,0) -- (4,0);
\draw[double] (4,0) -- (6,1) -- (8,2) -- (10,3);
\draw[double] (4,0) -- (6,0) -- (8,0) -- (10,0);
\draw[double] (6,1) -- (8,0);
\draw[double] (8,2) -- (10,1);
\draw[double] (8,0) -- (10,1);
\filldraw (0,1) circle(2.5pt);
\filldraw (2,2) circle(2.5pt);
\filldraw (2,0) circle(2.5pt);
\filldraw (4,3) circle(2.5pt);
\filldraw (4,1) circle(2.5pt);
\filldraw (6,4) circle(2.5pt);
\filldraw (6,2) circle(2.5pt);
\filldraw (6,0) circle(2.5pt);
\filldraw (8,5) circle(2.5pt);
\filldraw (8,3) circle(2.5pt);
\filldraw (8,1) circle(2.5pt);
\filldraw (10,6) circle(2.5pt);
\filldraw (10,4) circle(2.5pt);
\filldraw (10,2) circle(2.5pt);
\filldraw (10,0) circle(2.5pt);
\filldraw (4,0) circle(2.5pt);
\filldraw (6,1) circle(2.5pt);
\filldraw (8,0) circle(2.5pt);
\filldraw (8,2) circle(2.5pt);
\filldraw (10,0) circle(2.5pt);
\filldraw (10,1) circle(2.5pt);
\filldraw (10,2) circle(2.5pt);
\filldraw (10,3) circle(2.5pt);
\node[left] at (-0.5,1){$j_0=1$};
\node[right] at (10.5,6){$j=6$};
\node[right] at (10.5,4){$j=4$};
\node[right] at (10.5,3){$j=3$};
\node[right] at (10.5,2){$j=2$};
\node[right] at (10.5,2){$j=2$};
\node[right] at (10.5,1){$j=1$};
\node[right] at (10.5,0){$j=0$};
\node[below] at (0,-0.25){$m=0$};
\node[below] at (2,-0.25){$m=1$};
\node[below] at (4,-0.25){$m=2$};
\node[below] at (6,-0.25){$m=3$};
\node[below] at (8,-0.25){$m=4$};
\node[below] at (10,-0.25){$m=5$};
\end{tikzpicture}
\caption{CV tree for the call with $n=5$ and $j_0=1$}
\label{fig-cheuk-int}
\end{center}
\end{figure}
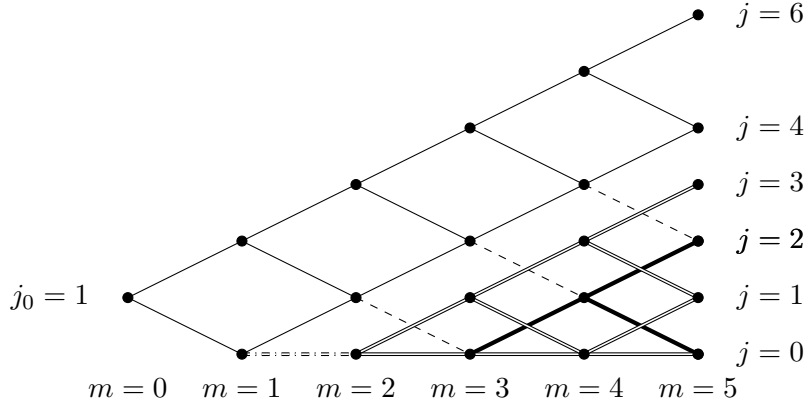

(II) The situation changes slightly when $j_0$ is an integer, see
Figure \ref{fig-cheuk-int}.  The three classes of nodes are defined as
before: Those for which the set of predecessors is the same as in the
traditional binomial tree starting from $(0,j_0)$ (indicated by solid
lines); the remaining terminal nodes of the traditional binomial tree
(indicated by dashed lines or by thick solid lines); and the terminal
nodes of the smaller Cheuk-Vorst tree starting at $(j_0+1,0)$
(indicated by double lines, thick solid lines, or the dash-dotted
line). Note, however, that some terminal nodes belong to both the
second and the third class.

(1) For the first class, we apply the same argument as before.

(2) For the nodes in the second class we only count the paths that are
inside the binomial tree. As before, there are such nodes only when
$n\geq j_0+2$, the possible number $k$ of upward paths ranges from
$n-\lfloor \frac{n+j_0}{2}\rfloor$ to $n-j_0-1$, and the terminal
level is given by $j=j_0+2k-n$. The number of corresponding paths
(inside the binomial tree) is $\binom{n}{k}$ minus the number of lost
paths. But now the lost paths are those paths in the binomial tree
that hit level $-1$ at some point. This is equivalent to counting the
paths from $(0,j_0+1)$ to $(n,j+1)$ that hit level 0, and the number
of such paths is, as before, $\binom{n}{k+j_0+1}$. This confirms the
lemma for the stated values of $j$ and $k$.

(3) Finally, for the nodes of the third class we count those paths
that at least once follow a path outside the binomial tree (in this
way we do not count paths twice when the node also belongs to the
second class). The argument as before confirms the third alternative
in the lemma.

We note that if the terminal level $j$ belongs to both the second and
the third class then the paths inside the binomial tree have exactly
$k=\frac{n+j-j_0}{2}$ upward jumps (whence $n+j-j_0$ is even), while
the paths outside this tree have at most
\[
\Big\lfloor\frac{n-j_0-1+j}{2}\Big\rfloor < \frac{n+j-j_0}{2}
\]
upward jumps. This shows that for such levels $j$ there is no conflict
in the statement of the lemma.
\end{proof}

Combining the formula \eqref{price} for the price of a lookback call
with Lemma \ref{lemme} we obtain the following.

\begin{theorem}\label{t-call}
  Let $0\leq t< T$ and $n\in\mathbb{N}$. The price of a European
  lookback call option with floating strike at time $t$ is given by
\begin{equation}\label{somme-call}
C^{fl}_n(t) = S_{t} (V_1-V_2+V_3),
\end{equation}
where
\begin{align*}
V_1&=\sum_{k=k_{\min}}^{n}(1-u^{n-j_0-2k}) \,\binom{n}{k}q^k\,(1-q)^{n-k},\\
V_2&=\sum_{k=k_{\min}}^{n-\joinf-1}\!\!\!(1-u^{n-j_0-2k})\,\binom{n}{k+\joinf+1}
q^k(1-q)^{n-k},\\
V_3&=\sum_{j=0}^{n-\joinf-1}\!\!\!(1-u^{-j}) \sum_{k=j}^{k_{\max}}
\Big[\binom{n}{k-j}-\binom{n}{k-j-1}\Big] \,q^k(1-q)^{n-k}
\end{align*}
with $j_0$ given by \eqref{j0}, $q$ given by \eqref{qu},
$k_{\min}=n-\lfloor \frac{n+j_0}{2}\rfloor$ and
$k_{\max}=\lfloor\frac{n-\joinf-1+j}{2}\rfloor$.
\end{theorem}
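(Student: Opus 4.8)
The plan is to substitute the three alternatives of Lemma~\ref{lemme} directly into the pricing formula \eqref{price} and then reorganize the resulting double sum into the three pieces $V_1$, $V_2$, $V_3$. Starting from
\[
C^{fl}_n(t) = S_t \sum_{j\in J}(1-u^{-j})\sum_{k=0}^n \Lambda^{j_0}_{j,k,n}\,q^k(1-q)^{n-k},
\]
I would split the inner sum over $k$ according to which alternative of the lemma produces a nonzero $\Lambda^{j_0}_{j,k,n}$, using the fact that for each pair $(j,k)$ at most one alternative applies.

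First I would treat the contributions of the first two classes together. In both, a terminal level satisfies $j=j_0+2k-n$, so $k=(n+j-j_0)/2$ is uniquely determined by $j$ and, conversely, $1-u^{-j}=1-u^{n-j_0-2k}$. This lets me trade the outer sum over these levels $j$ for a single sum over $k$. By \eqref{upwjumps} the admissible $k$ run over $n-\joinf\le k\le n$ for the first class and over $k_{\min}=n-\lfloor\frac{n+j_0}{2}\rfloor\le k\le n-\joinf-1$ for the second; these two ranges are adjacent and together tile $k_{\min}\le k\le n$. Collecting the $\binom{n}{k}$ term that occurs in both classes over this combined range yields exactly $V_1$, while the extra $-\binom{n}{k+\joinf+1}$ present only in the second class yields $-V_2$.

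Next I would note that the third class contributes $V_3$ essentially verbatim: its nodes are parametrized by the outer level $0\le j\le n-\joinf-1$ and by $j\le k\le k_{\max}=\lfloor\frac{n-\joinf-1+j}{2}\rfloor$, with $\Lambda^{j_0}_{j,k,n}=\binom{n}{k-j}-\binom{n}{k-j-1}$ and weight $1-u^{-j}$, which is precisely the definition of $V_3$. Summing the three contributions then gives $C^{fl}_n(t)=S_t(V_1-V_2+V_3)$.

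The one point needing care, which I would address last, is the case when $j_0$ is an integer, where a single terminal level $j$ may belong to both the second and the third class. Here I would invoke the closing remark in the proof of Lemma~\ref{lemme}: for such a level the paths counted by the second-class formula have $k=(n+j-j_0)/2$ upward jumps, whereas those counted by the third-class formula have at most $\lfloor\frac{n-\joinf-1+j}{2}\rfloor<(n+j-j_0)/2$ upward jumps. Consequently the inner sum over $k$ splits cleanly into one second-class index and disjoint third-class indices, so no term is counted twice and the grouping into $V_1-V_2+V_3$ is unambiguous. The only genuine difficulty is this bookkeeping of summation ranges; once the concatenation of the class-1 and class-2 ranges and the disjointness of indices in the integer case are verified, the identity follows by simply collecting terms.
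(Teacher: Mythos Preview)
Your proposal is correct and follows exactly the approach the paper indicates: the paper's own proof of Theorem~\ref{t-call} is the single sentence ``Combining the formula \eqref{price} for the price of a lookback call with Lemma \ref{lemme} we obtain the following,'' and you have simply spelled out the substitution and bookkeeping that this sentence leaves implicit. Your treatment of the integer-$j_0$ overlap via the closing remark of Lemma~\ref{lemme} is precisely the verification the paper expects the reader to supply.
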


\begin{figure}[ht!]
\begin{center}
\begin{tikzpicture}[scale=1.5]
\draw (0,0) -- (1,0) -- (2,0) -- (3,0);
\draw (0,0) -- (1,-0.5) -- (2,-1) -- (3,-1.5);
\draw (1,0) -- (2,-0.5) -- (3,-1);
\draw (1,-0.5) -- (2,0) -- (3,-0.5);
\draw (2,-0.5) -- (3,0);
\draw (2,-1) -- (3,-0.5);
\filldraw (0,0) circle(1.25pt);
\filldraw (1,0) circle(1.25pt);
\filldraw (1,-0.5) circle(1.25pt);
\filldraw (2,0) circle(1.25pt);
\filldraw (2,-0.5) circle(1.25pt);
\filldraw (2,-1) circle(1.25pt);
\filldraw (3,0) circle(1.25pt);
\filldraw (3,-0.5) circle(1.25pt);
\filldraw (3,-1) circle(1.25pt);
\filldraw (3,-1.5) circle(1.25pt);
\end{tikzpicture}
\caption{CV tree for the put with $n=3$}
\label{fig-cheuk-put}
\end{center}
\end{figure}
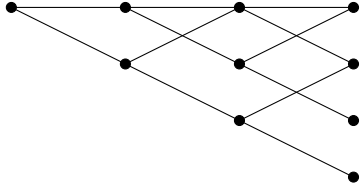

In the same way one can determine the price of a lookback put option
with floating strike. In that case we would have negative values for
levels because the underlying price is always lower than the maximal
price, see Figure~\ref{fig-cheuk-put}. For convenience we will write
these levels as $-j$ with $j\geq 0$. In terms of the levels~$j$ we are
then led to the same tree as in the case of the call.

With this adaptation, the level for the put at time $t_m$ is given by
a non-negative number~$j$ such that
\[
S_{t_m}=\big(\max_{t^*\leq t_m}S_{t^*}\big)u^{-j}.
\]
The initial level~$j_0$ (at initial time~$t$) of the tree satisfies
\begin{equation*}
S_{t}=\big(\max_{t^*\leq t}S_{t^*}\big)u^{-j_0},
\end{equation*}
so that
\begin{equation}\label{j0_put}
j_0=\frac{\log (\max_{t^*\leq t}S_{t^*}/S_{t})}{\sigma\sqrt{\tau/n}}.
\end{equation}

It can be shown as in Cheuk and Vorst \cite{cheuk_vorst} that the
price at time~$t$ of the European lookback put option with floating
strike is given by
\begin{equation*}
P^{fl}_n(t) = S_{t} \sum_{j\in J} (u^{j}-1)\sum_{k=0}^n
\Lambda^{j_0}_{j,k,n} (1-q)^{k}\,q^{n-k},
\end{equation*}
where $J$ is the set of possible levels (taken positively) at
maturity, $q$ is given by \eqref{qu} and $\Lambda^{j_0}_{j,k,n}$ is
the number of paths from the initial node $(0,j_0)$ to level~$j$ at
maturity that have exactly $k$ upward jumps.  Since we have evaluated
these numbers of paths in Lemma \ref{lemme} we obtain the following.

\begin{theorem}\label{t-put}
  Let $0\leq t< T$ and $n\in\mathbb{N}$. The price of a European
  lookback put option with floating strike at time~$t$ is given by
\[
P^{fl}_n(t) = S_{t} (V_1-V_2+V_3),
\]
where
\begin{align*}
V_1&=\sum_{k=k_{\min}}^{n}(u^{j_0+2k-n}-1) \,\binom{n}{k}(1-q)^{k}\,q^{n-k},\\
V_2&=\sum_{k=k_{\min}}^{n-\joinf-1} (u^{j_0+2k-n}-1)\,
	\binom{n}{k+\joinf+1} 	(1-q)^{k}\,q^{n-k},\\
V_3&=\sum_{j=0}^{n-\joinf-1}\!\!\!(u^{j}-1) \sum_{k=j}^{k_{\max}}
\Big[\binom{n}{k-j}-\binom{n}{k-j-1}\Big] \,(1-q)^{k}\,q^{n-k}
\end{align*}
with $j_0$ given by \eqref{j0_put}, $q$ given by \eqref{qu},
$k_{\min}=n-\lfloor \frac{n+j_0}{2}\rfloor$ and
$k_{\max}=\lfloor\frac{n-\joinf-1+j}{2}\rfloor$.
\end{theorem}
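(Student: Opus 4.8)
The plan is to insert the three-case formula for $\Lambda^{j_0}_{j,k,n}$ from Lemma~\ref{lemme} into the given expression for $P^{fl}_n(t)$ and then to reorganise the resulting double sum over $(j,k)$ into the three pieces $V_1$, $V_2$, $V_3$. Since Lemma~\ref{lemme} already specifies, for every pair $(j,k)$, a single nonzero value of $\Lambda^{j_0}_{j,k,n}$ (its closing remark having checked that even in the integer case there is no conflict between the classes), I may split the sum $\sum_{j\in J}\sum_{k=0}^n$ unambiguously according to which of the three alternatives the pair $(j,k)$ falls into. This mirrors exactly the computation behind Theorem~\ref{t-call}; the only changes are the payoff factor $u^{j}-1$ in place of $1-u^{-j}$ and the weights $(1-q)^{k}q^{n-k}$ in place of $q^{k}(1-q)^{n-k}$, both of which are simply carried along.

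For the first two classes the terminal level satisfies $j=j_0+2k-n$, so the map $k\mapsto j$ is a bijection and the inner sum over $k$ collapses to its single surviving term; equivalently, I would re-index the sum over these levels by $k$, using $k=\frac{n+j-j_0}{2}$ and writing $u^{j}=u^{j_0+2k-n}$. The first class contributes $\binom{n}{k}$ for $n-\joinf\le k\le n$, and the second contributes $\binom{n}{k}-\binom{n}{k+\joinf+1}$ for $k_{\min}\le k\le n-\joinf-1$. I would then collect the two $\binom{n}{k}$ pieces: since the ranges $[n-\joinf,n]$ and $[k_{\min},n-\joinf-1]$ are adjacent and their union is $[k_{\min},n]$, they merge into the single sum $V_1$. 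The remaining term $-\binom{n}{k+\joinf+1}$ of the second class, summed over $k_{\min}\le k\le n-\joinf-1$, is precisely $-V_2$.

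The third class consists of the integer levels $0\le j\le n-\joinf-1$, for which each $j$ carries a genuine range of upward-jump counts $j\le k\le k_{\max}$ with $\Lambda^{j_0}_{j,k,n}=\binom{n}{k-j}-\binom{n}{k-j-1}$. Keeping the double sum in the order $\sum_j\sum_k$ and inserting the factor $u^{j}-1$ reproduces $V_3$ verbatim. Assembling the three contributions gives $P^{fl}_n(t)=S_t(V_1-V_2+V_3)$. There is no genuinely hard step here: the argument is a mechanical substitution, and the only point demanding care is the index bookkeeping --- verifying that the lower limit $n-\joinf$ of the first class meets the upper limit $n-\joinf-1$ of the second, that their union is exactly the range $k_{\min}\le k\le n$ of $V_1$, and that the change of summation variable between $j$ and $k$ in the first two classes is handled consistently.
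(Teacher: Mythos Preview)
Your proposal is correct and follows exactly the paper's approach: the paper simply remarks that the stated price formula $P^{fl}_n(t) = S_{t} \sum_{j\in J} (u^{j}-1)\sum_{k=0}^n \Lambda^{j_0}_{j,k,n} (1-q)^{k}\,q^{n-k}$ together with the values of $\Lambda^{j_0}_{j,k,n}$ from Lemma~\ref{lemme} yields the theorem, without spelling out the bookkeeping you have carefully carried out. Your splitting into the three classes and the merging of the $\binom{n}{k}$ contributions from classes one and two into $V_1$ is precisely the same reorganisation that produces Theorem~\ref{t-call}, with the put's payoff factor and weights carried along as you note.
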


\section{Asymptotic expansions of the binomial cumulative distribution
  function}

In order to obtain asymptotic expansions for the price of lookback
options as the number of steps $n$ tends to infinity we first need to
derive an asymptotic expansion of the binomial cumulative distribution
function
\[
\bin{n,p_n}(j_n) = \sum_{k=0}^{j_n} \binom{n}{k}p_n^k (1-p_n)^{n-k}
\]
or, equivalently, for the complementary binomial cumulative
distribution function
\[
\binc{n,p_n}(j_n) = \sum_{k=j_n+1}^{n} \binom{n}{k}p_n^k (1-p_n)^{n-k},
\]
and that with a lower error term than in the expansions found in the
literature. Moreover, our method allows us to treat any sequence
$(p_n)$ satisfying the rather weak assumption that
\[
0<\liminf_{n\to\infty} p_n\leq \limsup_{n\to\infty} p_n<1.
\]

Our approach relies on the work of Uspensky~\cite{uspensky} who
represented binomial probabilities in a convenient analytical
form. Chang and Palmer~\cite[Lemma 1]{chang_palmer} refined this
result in order to obtain an approximation of order $o(n^{-1})$ in the
case when $p_n\to 1/2$. Recently, Lin and Palmer~\cite[Lemma
C.1]{lin_palmer} provided an estimate of order $O(n^{-2})$ if $p_n =
1/2+O(n^{-1/2})$.
 
Chang and Palmer~\cite{chang_palmer} have used their approximation in
order to obtain an asymptotic expansion for digital and European
options with a remainder term $O(n^{-3/2})$ (see \cite[Theorem
1.1]{lin_palmer}), while Lin and Palmer~\cite{lin_palmer} treated
barrier options with the same precision. The second
author~\cite{heuw1} recently evaluated lookback options with floating
strike, limiting himself to the price at emission (that is, at
$t=0$). Moreover, when the spot rate $r$ equals 0 then he only
obtained a remainder term $O(n^{-1})$ (see \cite[Remark 4.1]{heuw1}).

In order to obtain an asymptotic expansion for lookback options with
floating strike with a remainder term $O(n^{-3/2})$, and that for any
spot rate $r\geq 0$ and for the price at any time $t\geq 0$, we will
need an approximation of the binomial cumulative distribution with
remainder term $O(n^{-5/2})$, as provided by the following theorem.

In the sequel, $C>0$ denotes a generic constant, which may have a
different value at each occurrence. Note also that, for the sake of
readability, we will often drop the index $n$.

Let $\Phi$ denote the standard normal cumulative distribution
function.

\begin{theorem}\label{thm-general}
Suppose that $p=p_n$ satisfies
\[
0<\liminf_{n\to\infty} p_n\leq \limsup_{n\to\infty} p_n<1.
\]
If $q=q_n=1-p_n$ and $0\leq j=j_n\leq n$, then
\[
\begin{split}
 &\sum_{k=0}^j \binom{n}{k}p^k q^{n-k}\\
  &\quad=\Phi(y)+ \frac{e^{-\frac{1}{2}y^2}}{\sqrt{2\pi}}
  \Big(\frac{P_1}{\sqrt{V}}+\frac{P_2}{V}
  +\frac{P_3}{V^{3/2}}+\frac{P_4}{V^2}\Big)
  +O\Big(\frac{1}{n^{5/2}}\Big)
\end{split}
\]
as $n\to\infty$, where $V=npq$, $y=\frac{j-np+1/2}{\sqrt{V}}$ and
\begin{itemize}
\setlength{\itemsep}{1mm}
\item[$P_1=$\!] $\frac{1}{6}(q-p)(1-y^2)$,
\item[$P_2=$\!] $y[\frac{1}{72}(-3+7y^2-y^4)
	-\frac{pq}{36}(-3+11y^2-2y^4)]$,
\item[$P_3=$\!] $(q-p)[
	\frac{1}{6480}(123+129y^2-384y^4+95y^6-5y^8)\\ 
	-\frac{pq}{3240}(3+69y^2-399y^4+145y^6-10y^8)]$,
\item[$P_4=$\!] $y[
	\frac{1}{155520}(-4293-1359y^2+6165y^4-1971y^6+185y^8
		-5y^{10})\\
	+\frac{pq}{38880}(3105+1395y^2-7794y^4+2979y^6
		-325y^8+10y^{10})\\
	+\frac{p^2 q^2}{38880}(135-1035y^2+7947y^4-4167y^6
		+560y^8-20y^{10})]$.
\end{itemize}
\end{theorem}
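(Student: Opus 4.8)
The plan is to produce a higher-order Edgeworth expansion of the lattice sum $\sum_{k=0}^{j}\binom{n}{k}p^kq^{n-k}$, pushed one order beyond what Chang--Palmer~\cite{chang_palmer} and Lin--Palmer~\cite{lin_palmer} obtained and valid for the full range of $p_n$ allowed by the hypothesis. Writing $b_k=\binom{n}{k}p^kq^{n-k}$ and $x_k=(k-np)/\sqrt V$, the idea is to first expand each individual probability $b_k$ as a Gaussian times a polynomial correction, following Uspensky's analytic representation~\cite{uspensky}, and then to sum over $k\le j$ by Euler--Maclaurin summation in its \emph{midpoint} form, so that summing over the integer lattice becomes integrating up to the half-integer endpoint $j+\tfrac12$. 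This is precisely why the continuity correction $+\tfrac12$ appears inside $y=(j-np+\tfrac12)/\sqrt V$, and, crucially, why the expansion contains no oscillating (sawtooth) terms: evaluating at the half-integer endpoint cancels the odd boundary contributions of Euler--Maclaurin. Throughout, the hypothesis $0<\liminf p_n\le\limsup p_n<1$ is used only to guarantee $V=npq\asymp n$, so that a gain of one factor $1/\sqrt V$ is a genuine gain of $n^{-1/2}$.

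First I would establish a uniform local expansion of $b_k$. Applying Stirling's series to the three factorials in $\binom{n}{k}$, one expands $\log b_k$ in powers of $1/\sqrt V$ about $x_k$ and exponentiates to obtain
\[
b_k=\frac{1}{\sqrt{2\pi V}}\,e^{-x_k^2/2}\Big(1+\frac{a_1(x_k)}{\sqrt V}+\frac{a_2(x_k)}{V}+\frac{a_3(x_k)}{V^{3/2}}+\frac{a_4(x_k)}{V^2}\Big)+\text{(smaller)},
\]
with explicit polynomials $a_i$ whose coefficients are built from the Bernoulli cumulants (the third cumulant being $\propto q-p$). The genuine analytic care here is uniformity: this expansion is accurate only for moderate deviations, so I would restrict attention to $|x_k|\le C\sqrt{\log n}$ and show, using $V\asymp n$ together with the decay $e^{-x_k^2/2}$, that the omitted terms and all lattice points beyond this window contribute only $O(n^{-5/2})$ to the sum.

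Next I would sum. By the midpoint Euler--Maclaurin formula the lattice sum equals $\int_{-\infty}^{y}\tfrac{1}{\sqrt{2\pi}}e^{-u^2/2}\big(1+\sum_i a_i(u)V^{-i/2}\big)\,du$ plus boundary correction terms supported at the upper endpoint, the lower endpoint contributing nothing by Gaussian decay. The leading integral gives $\Phi(y)$; each correction integral $\int_{-\infty}^{y}\phi(u)a_i(u)\,du$ reduces, after expressing $a_i\phi$ through derivatives of $\phi$, to $\phi(y)$ times a polynomial in $y$ (the mean-zero structure of the density corrections ensuring that no further $\Phi(y)$ term is produced). The even Euler--Maclaurin boundary terms, governed by $B_2,B_4,\dots$ and by derivatives of the corrected Gaussian at $y$, add contributions of the same shape $\phi(y)\times\text{polynomial}$. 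Collecting everything by the powers $V^{-1/2},V^{-1},V^{-3/2},V^{-2}$ then yields $\Phi(y)+\phi(y)\big(P_1V^{-1/2}+P_2V^{-1}+P_3V^{-3/2}+P_4V^{-2}\big)$; as a sanity check the leading correction must reproduce the classical skewness term $\tfrac{q-p}{6\sqrt V}\,\phi(y)(1-y^2)$, which is exactly $P_1/\sqrt V$, and the alternating parity (the $P_i$ with the factor $q-p$ even in $y$, the others odd) should persist at every order.

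The main obstacle is not conceptual but the sheer bookkeeping of pushing the expansion to order $V^{-2}$: one must track the Stirling corrections $a_1,\dots,a_4$, the Euler--Maclaurin boundary derivatives, and the $pq$- and $p^2q^2$-dependent coefficients simultaneously, and reorganize them into the stated degree-$8$ and degree-$10$ polynomials $P_3$ and $P_4$. I expect this algebra to be the delicate part, to be carried out systematically (and checked symbolically), while the analytic effort stays concentrated on the uniform remainder bound of the second step, which is what legitimizes truncating both the local expansion and the summation range at the $O(n^{-5/2})$ level.
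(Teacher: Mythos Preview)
Your approach is correct but takes a genuinely different route from the paper's. The paper does not expand the individual probabilities $b_k$ via Stirling and then sum by Euler--Maclaurin; instead it starts from Uspensky's \emph{Fourier integral} representation
\[
\sum_{k=0}^{j}\binom{n}{k}p^kq^{n-k}=J(y)-J(y'),\qquad
J(y)=\frac{1}{2\pi}\int_0^{\pi}\rho^n\,\frac{\sin(y\sqrt V\,\varphi-\chi)}{\sin(\varphi/2)}\,d\varphi,
\]
truncates the integral at $\tau=V^{-1/4}$, Taylor-expands the amplitude $\rho^n$ and the phase $\chi$ in powers of $\varphi$ on $[0,\tau]$ (this is the content of the two preparatory lemmas), and then evaluates the resulting integrals $\int_0^\infty x^m e^{-x^2/2}\{\sin,\cos\}(yx)\,dx$ via Hermite polynomials. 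In the paper the continuity correction $+\tfrac12$ and the absence of sawtooth terms are hard-wired into Uspensky's formula from the outset, whereas in your plan they emerge from choosing the midpoint form of Euler--Maclaurin. Both routes face the same $V^{-2}$-level bookkeeping, but they organise it differently: the Fourier method packages everything into a single oscillatory integral whose expansion becomes mechanical once $\rho^n$ and $\chi$ are developed, while your Edgeworth route keeps the probabilistic structure (cumulants, lattice correction) visible and makes the parity pattern you observed more transparent. Two small remarks: your phrase ``following Uspensky's analytic representation'' is misleading here, since in this paper that name refers precisely to the Fourier integral above, which your argument bypasses; and your sketch tacitly assumes $y$ moderate, so for the full range $0\le j\le n$ you should add the easy observation that when $|y|\to\infty$ both the binomial tail and $\phi(y)P_k(y)/V^{k/2}$ are already $O(n^{-5/2})$ by Gaussian decay, making the identity trivial there.
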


As mentioned above, the proof of this result will be based on an
analytical representation of binomial probabilities due to
Uspensky~\cite[p. 121]{uspensky}.

\begin{theorem*}[Uspensky] 
Let $0<p<1$, $q=1-p$ and $0\leq j\leq n$ be fixed numbers. 
Then 
\[
\sum_{k=0}^j \binom{n}{k}p^k q^{n-k} = J(y)-J(y'),
\]
where 
\[
y=\frac{j-np+\frac{1}{2}}{\sqrt{V}}\quad\text{and}\quad
y'=-\frac{np+\frac{1}{2}}{\sqrt{V}}
\]
with $V=npq$; here the function $J$ is defined by 
\begin{equation}\label{fct-J}
  J(y)=\frac{1}{2\pi}\int_0^{\pi}\rho^n\,
  \frac{\sin(y\sqrt{V}\varphi-\chi)}
  {\sin\frac{\varphi}{2}}\,\mathrm{d}\varphi,\quad y\in\mathbb{R},
\end{equation}
where 
\[
\rho= |pe^{i\varphi}+q|,\quad\omega = \arg (pe^{i\varphi}+q)\quad\text{and}\quad \chi=n\omega-np\varphi.
\]
\end{theorem*}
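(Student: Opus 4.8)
The plan is to reduce the identity to a single-step telescoping relation. First I would observe that $y'$ is exactly the value taken by $y$ when $j$ is replaced by $-1$: writing $y_m=\frac{m-np+\frac12}{\sqrt V}$ for integer $m$, one has $y_j=y$ and $y_{-1}=\frac{-np-\frac12}{\sqrt V}=y'$. Since the left-hand side is $\sum_{k=0}^j\binom{n}{k}p^kq^{n-k}$ and the empty sum at $j=-1$ is $0$, it suffices to prove the increment formula
\[
J(y_m)-J(y_{m-1})=\binom{n}{m}p^m q^{n-m}
\]
for every integer $m$ (with the convention that the binomial coefficient vanishes outside $0\le m\le n$); summing over $m=0,\dots,j$ then telescopes to $J(y_j)-J(y_{-1})=J(y)-J(y')$, which is the assertion.

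To compute the increment I would first simplify the phase in the integrand. Because $y_m\sqrt V=m-np+\frac12$ and $\chi=n\omega-np\varphi$, the argument of the sine collapses to $y_m\sqrt V\,\varphi-\chi=(m+\tfrac12)\varphi-n\omega$, so the $np\varphi$ terms cancel. Subtracting the analogous expression for $m-1$ and applying $\sin A-\sin B=2\cos\frac{A+B}{2}\sin\frac{A-B}{2}$ with $\frac{A-B}{2}=\frac{\varphi}{2}$, the inconvenient factor $\sin\frac{\varphi}{2}$ in the denominator cancels exactly, leaving
\[
J(y_m)-J(y_{m-1})=\frac{1}{\pi}\int_0^{\pi}\rho^n\cos(m\varphi-n\omega)\,\mathrm{d}\varphi.
\]
This cancellation is the crucial simplification: in the difference form the integrand is an honest continuous function, with no removable singularity at $\varphi=0$ and no dependence on the normalization that plagues $J$ individually.

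The remaining step identifies this integral through the complex representation. By definition $\rho e^{i\omega}=pe^{i\varphi}+q$, hence $\rho^n e^{in\omega}=(pe^{i\varphi}+q)^n$, and therefore
\[
\rho^n\cos(m\varphi-n\omega)=\mathrm{Re}\big(e^{im\varphi}\,\overline{(pe^{i\varphi}+q)^n}\big)=\sum_{k=0}^n\binom{n}{k}p^k q^{n-k}\cos\big((m-k)\varphi\big),
\]
where I expanded $(pe^{i\varphi}+q)^n$ by the binomial theorem. Integrating term by term over $[0,\pi]$ (a finite sum, so the interchange is immediate) and using the orthogonality relation $\int_0^{\pi}\cos(\ell\varphi)\,\mathrm{d}\varphi$, which equals $\pi$ when the integer $\ell=0$ and $0$ otherwise, kills every term except $k=m$, yielding $\frac{1}{\pi}\cdot\pi\binom{n}{m}p^m q^{n-m}$. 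This is exactly the increment formula, and the telescoping above completes the proof.

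The genuine care is needed not in the algebra but in the definition of $\omega$. I would fix $\omega=\arg(pe^{i\varphi}+q)$ as the continuous branch on $[0,\pi]$ normalized by $\omega(0)=0$, which forces $\chi(0)=0$ and makes $J$ well defined. For $p,q>0$ and $\varphi\in(0,\pi)$ the point $pe^{i\varphi}+q$ has strictly positive imaginary part $p\sin\varphi$, so it never meets the origin except possibly at $\varphi=\pi$ when $p=\frac12$; there $\rho=0$ and the integrand vanishes, so no difficulty arises. The main obstacle is thus purely this bookkeeping---confirming that the branch choice is consistent and that $\sin\frac{\varphi}{2}>0$ on $(0,\pi]$ so the quotient in $J$ itself is meaningful---after which the sum-to-product cancellation and the orthogonality computation are routine.
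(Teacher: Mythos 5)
Your proof is correct. One point of comparison is moot here: the paper does not prove this statement at all---it is quoted as a known result from Uspensky's book (p.~121) and used as a black box---so your argument should be judged as a self-contained verification, and it holds up. It is, in essence, the classical derivation run in reverse. The standard route (Uspensky's) starts from the Fourier-inversion identity $\binom{n}{m}p^m q^{n-m}=\frac{1}{\pi}\int_0^{\pi}\rho^n\cos(m\varphi-n\omega)\,\mathrm{d}\varphi$, which follows exactly as in your third paragraph from $\rho^n e^{in\omega}=(pe^{i\varphi}+q)^n$ and orthogonality of cosines, then sums over $m=0,\dots,j$ and collapses the Dirichlet-type sum $\sum_{m=0}^{j}\cos(m\varphi-n\omega)$ into a ratio of sines, which is how the factor $\sin\frac{\varphi}{2}$ in the denominator of $J$ arises in the first place. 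You instead posit the telescoping increments $J(y_m)-J(y_{m-1})$ and let the sum-to-product identity cancel $\sin\frac{\varphi}{2}$; the two computations are mirror images. Your organization has two small advantages: each increment is an integral of a genuinely continuous function, so the removable singularity of $J$ at $\varphi=0$ never needs discussion, and the observation $y_{-1}=y'$ explains transparently why $y'$ appears in the statement. Your care about the branch of $\omega$ is sound but even slightly more than is needed: since $\chi=n\omega-np\varphi$ enters only through $\sin(\cdot-\chi)$ and changing the branch shifts $\chi$ by an integer multiple of $2\pi n$, the integrand defining $J$ is insensitive to the choice; all your computation actually uses is $\rho e^{i\omega}=pe^{i\varphi}+q$.
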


We will derive Theorem~\ref{thm-general} from Uspensky's
representation by employing and refining his ideas
(see~\cite[pp. 121--129]{uspensky}). For this we will need two
preliminary lemmas.

\begin{lemme}\label{lemme-da-R}
Suppose that $p=p_n$ satisfies
\[
0<\liminf_{n\to\infty} p_n\leq \limsup_{n\to\infty} p_n<1.
\]
Let $q=q_n=1-p_n$ and $V=npq$. For a fixed constant $M>0$, let
$\varphi$ be a positive number such that $\varphi\leq M/V^{1/4}$. Then
we have for $R= |pe^{i\varphi}+q|^n$ that
\begin{equation*}
R= e^{R_2\varphi^2}\Big(1+R_4\varphi^4+R_6\varphi^6
+\frac{1}{2}R_4^2\varphi^8
+\sum_{k=1}^3 O(n^k\varphi^{2k+6})\Big)
\end{equation*}
as $n\to\infty$, where
\begin{itemize}
\setlength{\itemsep}{1mm}
\item[$R_2=$\!] $-\frac{1}{2}V$,
\item[$R_4=$\!] $\frac{1}{4}V\big(\frac{1}{6}-pq\big)$,
\item[$R_6=$\!] $-\frac{1}{6}V\big(\frac{1}{120}-\frac{1}{4}pq+p^2q^2\big)$.
\end{itemize}
\end{lemme}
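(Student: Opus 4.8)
The plan is to compute the modulus explicitly, pass to the logarithm (where the expansion is cleanest and the coefficients are transparent), and only then exponentiate back, tracking errors carefully. Writing $pe^{i\varphi}+q=(p\cos\varphi+q)+ip\sin\varphi$ and using $p+q=1$ gives the compact identity
\[
\rho^2=|pe^{i\varphi}+q|^2=p^2+q^2+2pq\cos\varphi=1-4pq\sin^2(\varphi/2).
\]
This form is convenient because $\rho^2=1+x$ with $x=-4pq\sin^2(\varphi/2)$, and $x$ is small throughout the range $\varphi\leq M/V^{1/4}$ (there $\sin^2(\varphi/2)\to 0$), so we stay safely inside the radius of convergence of the logarithm.

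Next I would write $\log R=\tfrac{n}{2}\log(\rho^2)$ and expand. Substituting the Taylor series of $\cos\varphi$ (equivalently of $\sin^2(\varphi/2)$) into $\log(1+x)=x-\tfrac{x^2}{2}+\tfrac{x^3}{3}-\cdots$ and collecting powers of $\varphi$, the coefficient of each $\varphi^{2k}$ is a fixed polynomial in $pq$. Multiplying by $n/2$ and recalling $V=npq$, the coefficients of $\varphi^2,\varphi^4,\varphi^6$ become exactly $R_2,R_4,R_6$ as stated; this is a direct but careful bookkeeping check. Because $p_n\in[\delta,1-\delta]$ for large $n$, the quantity $pq$ lies in a fixed compact subinterval of $(0,1/4]$, so every coefficient polynomial is $O(1)$ uniformly in $n$, and the tail of the series beyond $\varphi^6$ yields
\[
\log R=R_2\varphi^2+R_4\varphi^4+R_6\varphi^6+O(n\varphi^8).
\]

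Finally I would exponentiate. Factoring out $e^{R_2\varphi^2}$ and setting $W=R_4\varphi^4+R_6\varphi^6+O(n\varphi^8)$, I expand $e^W=1+W+\tfrac{W^2}{2}+\cdots$. The only product term comparable in size to the explicit ones is $\tfrac12(R_4\varphi^4)^2=\tfrac12 R_4^2\varphi^8$, which I keep; everything else I bound. The repeated tool is the estimate $V\varphi^4=O(1)$, hence $n\varphi^4=O(1)$ since $V\asymp n$, which follows from $\varphi\leq M/V^{1/4}$. With it, the leftover of $W$ gives $O(n\varphi^8)$, the leftover of $\tfrac{W^2}{2}$ has leading term $R_4R_6\varphi^{10}=O(n^2\varphi^{10})$, and $\tfrac{W^3}{6}$ gives $O(n^3\varphi^{12})$; moreover every higher power $\tfrac{W^m}{m!}$ with $m\geq 3$ collapses into $O(n^3\varphi^{12})$, because its leading term is $O(n^m\varphi^{4m})$ and $n^m\varphi^{4m}=(n\varphi^4)^{m-3}\,n^3\varphi^{12}=O(n^3\varphi^{12})$. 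This produces precisely the claimed expansion with remainder $\sum_{k=1}^3 O(n^k\varphi^{2k+6})$.

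The main obstacle is this last step: organizing the exponential expansion so that all omitted terms fall exactly into the three advertised error classes rather than a vaguer single bound. The two delicate points are (i) isolating the one genuinely non-negligible cross term $\tfrac12 R_4^2\varphi^8$ while discarding the rest, and (ii) verifying, via $n\varphi^4=O(1)$, that the entire infinite tail of the exponential series is absorbed into $O(n^3\varphi^{12})$, with all bounds uniform in $p_n$ thanks to the hypothesis that $pq$ stays in a compact subset of $(0,1/4]$.
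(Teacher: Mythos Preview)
Your proposal is correct and follows essentially the same route as the paper: compute $\rho^2=1-4pq\sin^2(\varphi/2)$, expand $\log R=\tfrac{n}{2}\log(1+x)$ in powers of $\varphi$ to obtain $R_2\varphi^2+R_4\varphi^4+R_6\varphi^6+O(n\varphi^8)$, then factor out $e^{R_2\varphi^2}$ and Taylor-expand the remaining exponential, using $n\varphi^4=O(1)$ to control the tail. The only cosmetic difference is that the paper carries explicit two-sided inequalities (via alternating-series bounds on $\sin^{2k}(\varphi/2)$) where you use big-$O$ notation throughout; the logical structure and the key estimate $V\varphi^4\leq M^4$ are identical.
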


\begin{proof}
Let
\[
\rho = |pe^{i\varphi}+q|,
\]
so that $R=\rho^n$. We then have that
\[
\rho=(p^2+2pq\cos\varphi+q^2)^{1/2}
=\Big(1-4pq\sin^2\frac{\varphi}{2}\Big)^{1/2}.
\]
This gives us that
\begin{equation}\label{lemme-lnrho}
\begin{split}
\log\rho
&=\frac{1}{2}\log\Big(1-4pq\sin^2\frac{\varphi}{2}\Big)\\
&=-2pq\sin^2\frac{\varphi}{2}
	-4(pq)^2\sin^4\frac{\varphi}{2}
	-\frac{32}{3}(pq)^3\sin^6\frac{\varphi}{2}
	-\delta,
\end{split}
\end{equation}
where $\delta=32(1-\eta)^{-4}(pq)^4\sin^8\frac{\varphi}{2}$ for some
real number~$\eta$ between 0 and $4pq\sin^2\frac{\varphi}{2}$. Since
$\liminf_{n\to\infty} p_n, \liminf_{n\to\infty} q_n >0$ we have that
$V=npq\to\infty$ and therefore $0\leq \varphi\leq M/V^{1/4}\to 0$ as
$n\to\infty$. Thus
\begin{equation}\label{inegalite-delta}
0\leq\delta\leq C p^4q^4\varphi^8
\end{equation}
for $n$~sufficiently large. 

Now, in order to obtain bounds for $\log\rho$, we use suitable Taylor
expansions about~0. First, we have
\begin{equation*}
\sin^2\frac{\varphi}{2}=\frac{1}{4}\varphi^2
	-\frac{1}{48}\varphi^4
	+\frac{1}{1440}\varphi^6
	-\frac{1}{80640}\varphi^8
	+\frac{1}{7257600}\varphi^{10}
	+o(\varphi^{10}).
\end{equation*} 
Thus we have
\begin{equation}\label{inegalite-sin2}
\frac{1}{4}\varphi^2
	-\frac{1}{48}\varphi^4
	+\frac{1}{1440}\varphi^6
	-\frac{1}{80640}\varphi^8
\leq\sin^2\frac{\varphi}{2}
\leq\frac{1}{4}\varphi^2
	-\frac{1}{48}\varphi^4
	+\frac{1}{1440}\varphi^6
\end{equation}
for $n$~sufficiently large. Next we deduce from
\begin{equation*}
\sin^4\frac{\varphi}{2}=\frac{1}{16}\varphi^4
	-\frac{1}{96}\varphi^6
	+\frac{1}{1280}\varphi^8
	-\frac{17}{483840}\varphi^{10}
	+o(\varphi^{10})
\end{equation*}
that
\begin{equation}\label{inegalite-sin4}
\frac{1}{16}\varphi^4
	-\frac{1}{96}\varphi^6
\leq\sin^4\frac{\varphi}{2}
\leq\frac{1}{16}\varphi^4
	-\frac{1}{96}\varphi^6
	+\frac{1}{1280}\varphi^8
\end{equation}
for $n$~sufficiently large. Finally, we obtain from
\begin{equation*}
\sin^6\frac{\varphi}{2}=\frac{1}{64}\varphi^6
	-\frac{1}{256}\varphi^8
	+\frac{7}{15360}\varphi^{10}
	+o(\varphi^{10})
\end{equation*}
that
\begin{equation}\label{inegalite-sin6}
\frac{1}{64}\varphi^6
	-\frac{1}{256}\varphi^8
\leq \sin^6\frac{\varphi}{2}
\leq\frac{1}{64}\varphi^6
\end{equation}
for $n$~sufficiently large.  From (\ref{inegalite-delta}),
(\ref{inegalite-sin2}), (\ref{inegalite-sin4}),
(\ref{inegalite-sin6}), we find upper and lower bounds for the
expression~(\ref{lemme-lnrho}):
\begin{equation}\label{lemme-upperbond}
\begin{split}
\log\rho 
&\leq -2pq\Big(\frac{1}{4}\varphi^2
		-\frac{1}{48}\varphi^4
		+\frac{1}{1440}\varphi^6
		-\frac{1}{80640}\varphi^8\Big)\\
&\qquad -4p^2 q^2\Big(\frac{1}{16}\varphi^4
		-\frac{1}{96}\varphi^6\Big)
	-\frac{32}{3}p^3 q^3\Big(\frac{1}{64}\varphi^6
		-\frac{1}{256}\varphi^8\Big)\\
&=\frac{R_2}{n}\varphi^2+\frac{R_4}{n}\varphi^4
+\frac{R_6}{n}\varphi^6+\frac{1}{24}pq\Big(\frac{1}{1680}+p^2q^2\Big)\varphi^8
\end{split}
\end{equation}
and
\begin{equation}\label{lemme-lowerbond}
\begin{split}
\log\rho 
&\geq -2pq\Big(\frac{1}{4}\varphi^2
		-\frac{1}{48}\varphi^4+\frac{1}{1440}\varphi^6\Big)
-4p^2 q^2\Big(\frac{1}{16}\varphi^4
		-\frac{1}{96}\varphi^6\\
&\qquad+\frac{1}{1280}\varphi^8\Big)-\frac{32}{3}p^3 q^3
\Big(\frac{1}{64}\varphi^6\Big)
	-C p^4 q^4\varphi^8\\
&= \frac{R_2}{n}\varphi^2+\frac{R_4}{n}\varphi^4
+\frac{R_6}{n}\varphi^6
-p^2q^2\Big(\frac{1}{320}+Cp^2 q^2\Big)\varphi^8.
\end{split}
\end{equation}
By combining (\ref{lemme-upperbond}) and (\ref{lemme-lowerbond}), we
establish that
\begin{equation*}
e^{R_2\varphi^2+R_4\varphi^4+R_6\varphi^6}e^{\Delta_1}\leq R=\rho^n\leq
e^{R_2\varphi^2+R_4\varphi^4+R_6\varphi^6}e^{\Delta_2},
\end{equation*}
where
\[
\Delta_1=-V pq\Big(\frac{1}{320}+C p^2q^2\Big)\varphi^8\leq 0,\quad
\Delta_2=\frac{1}{24}V\Big(\frac{1}{1680}+p^2q^2\Big)\varphi^8\geq 0.
\]
This clearly implies that
\begin{equation*}
\big|R-e^{R_2\varphi^2+R_4\varphi^4+R_6\varphi^6}\big|
\leq e^{R_2\varphi^2+R_4\varphi^4+R_6\varphi^6}
\big(e^{\Delta_2}-e^{\Delta_1}\big).
\end{equation*}
Using the facts that $e^x\leq 1+2x$ for $0\leq
x\leq 1$ and $e^{x}\geq 1+x$, we have for large $n$ that
\[
e^{\Delta_2}-e^{\Delta_1}\leq 2\Delta_2-\Delta_1
\leq C V \varphi^8;
\]
note that $\Delta_2\to 0$ as $n\to\infty$. It follows that
\begin{equation*}
\big|R-e^{R_2\varphi^2+R_4\varphi^4+R_6\varphi^6}\big|
\leq C e^{R_2\varphi^2+R_4\varphi^4+R_6\varphi^6} V \varphi^8,
\end{equation*}
in other words,
\begin{equation}\label{lemme-resultat-interm}
R=e^{R_2\varphi^2+R_4\varphi^4+R_6\varphi^6}+ O(e^{R_2\varphi^2+R_4\varphi^4+R_6\varphi^6} V \varphi^8).
\end{equation}
Now, the definitions of $R_4$ and $R_6$ and the fact that
$0\leq\varphi\leq M/V^{1/4}$ imply that
\begin{equation}\label{lemme-r4r6}
R_4\varphi^4+R_6\varphi^6\leq C.
\end{equation}
By combining (\ref{lemme-resultat-interm}) and (\ref{lemme-r4r6}), we
see that
\begin{equation}\label{lemme-resultat-interm2}
R=e^{R_2\varphi^2}(e^{R_4\varphi^4+R_6\varphi^6}+O(n\varphi^8)).
\end{equation}

We rewrite the second exponential term with a Taylor expansion about~0
to obtain for some $\eta$ between 0 and $R_4\varphi^4+R_6\varphi^6$,
\begin{equation*}
e^{R_4\varphi^4+R_6\varphi^6}
= 1+(R_4\varphi^4+R_6\varphi^6)+\frac{1}{2}(R_4\varphi^4+R_6\varphi^6)^2
	+\frac{1}{6}e^{\eta}(R_4\varphi^4+R_6\varphi^6)^3,
\end{equation*}
hence with \eqref{lemme-r4r6}
\begin{equation}\label{lemme-R2}
e^{R_4\varphi^4+R_6\varphi^6}=1+R_4\varphi^4+R_6\varphi^6
	+\frac{1}{2}R_4^2\varphi^8+O(n^2\varphi^{10})
	+O(n^3\varphi^{12}).
\end{equation}
Combining~(\ref{lemme-resultat-interm2}) and~(\ref{lemme-R2}) we
obtain the result.
\end{proof}

\begin{lemme}\label{lemme-ebnphi}
Suppose that $p=p_n$ satisfies
\[
0<\liminf_{n\to\infty} p_n\leq \limsup_{n\to\infty} p_n<1.
\]
Let $q=q_n=1-p_n$ and $V=npq$. Then, for any non-negative integer~$m$,
there exists a constant $C>0$ such that
\[
I_m:=\int_{0}^{\infty}e^{-\frac{1}{2}V\varphi^2}\varphi^m\,\mathrm{d}\varphi
\leq C n^{-(m+1)/2}.
\]

Moreover, suppose that $\tau=\tau_n>0$ satisfies
$\tau^{-1}=O(n^{\alpha})$ with $\alpha<1/2$. Then for any
integers~$m$ and~$k$, there exists a constant $C>0$ such that
\[
I^*_m:=\int_{\tau}^{\infty}e^{-\frac{1}{2}V\varphi^2}\varphi^m\,\mathrm{d}\varphi
\leq C n^{-k}.
\]
\end{lemme}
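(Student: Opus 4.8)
The plan is to exploit the single structural fact that the hypothesis $0<\liminf p_n\leq\limsup p_n<1$ forces $p_nq_n$ to stay bounded away from both $0$ and $\tfrac14$ for large $n$, so that there are constants $c_1,c_2>0$ with
\[
c_1 n\leq V=npq\leq c_2 n
\]
for all sufficiently large $n$. Every estimate below reduces to this comparison between $V$ and $n$.

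For the first assertion I would simply rescale. Substituting $u=\sqrt{V}\,\varphi$ in $I_m$ gives
\[
I_m=V^{-(m+1)/2}\int_0^\infty e^{-\frac{1}{2}u^2}u^m\,\mathrm{d}u,
\]
and the remaining integral is a finite constant depending only on $m$ (namely $2^{(m-1)/2}\,\Gamma(\tfrac{m+1}{2})$). Using $V\geq c_1 n$ then yields $I_m\leq C\,V^{-(m+1)/2}\leq C\,n^{-(m+1)/2}$, which is the claim.

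For the second assertion the key observation is that the lower cutoff $\tau$ pushes the Gaussian into a regime where its tail decays faster than any power of $n$. Indeed, $\tau^{-1}=O(n^{\alpha})$ means $\tau\geq c\,n^{-\alpha}$ for some $c>0$, so
\[
V\tau^2\geq c_1 c^2\, n^{1-2\alpha},
\]
and since $\alpha<1/2$ this grows like a positive power of $n$. I would then split the exponential as $e^{-\frac{1}{2}V\varphi^2}=e^{-\frac{1}{4}V\varphi^2}\,e^{-\frac{1}{4}V\varphi^2}$. On $[\tau,\infty)$ the first factor is at most $e^{-\frac{1}{4}V\tau^2}$, while the integral of the second factor over $[\tau,\infty)$ is bounded by its integral over $[0,\infty)$, which is $\leq C\,n^{-(m+1)/2}$ by the first part (applied with $V$ replaced by $V/2$, of the same order in $n$). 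Hence
\[
I_m^*\leq C\,n^{-(m+1)/2}\,e^{-\frac{1}{4}V\tau^2}\leq C\,n^{-(m+1)/2}\,e^{-C' n^{1-2\alpha}}.
\]
Because $e^{-C' n^{1-2\alpha}}$ decays faster than any polynomial in $n$, the product is $\leq C\,n^{-k}$ for every fixed $k$, which is exactly what is required.

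The argument is almost entirely routine; if anything deserves to be called the main point, it is the role of the hypothesis $\alpha<1/2$, which is precisely what guarantees that the exponent $V\tau^2$ tends to infinity (rather than remaining bounded) and thereby makes the Gaussian tail dominate every polynomial rate. Everything else is the standard rescaling of a Gaussian moment together with the elementary fact that super-polynomial decay beats polynomial growth.
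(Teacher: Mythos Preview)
Your proof of the first assertion is essentially identical to the paper's: both substitute $u=\sqrt{V}\,\varphi$ and use $V\geq c_1 n$. (A minor slip: the hypothesis does not force $p_nq_n$ to be bounded away from $\tfrac14$---take $p_n\equiv\tfrac12$---but you only use the lower bound on $V$, so this is harmless.)

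For the second assertion there is a gap. The lemma allows $m$ to be \emph{any} integer, including negative ones, and this case is actually used in the paper (with $m=-1$, when estimating $J^*(y)$). Your argument bounds $\int_\tau^\infty e^{-\frac14 V\varphi^2}\varphi^m\,\mathrm{d}\varphi$ by $\int_0^\infty e^{-\frac14 V\varphi^2}\varphi^m\,\mathrm{d}\varphi$ and then invokes the first part; but for $m<0$ this last integral diverges at $0$, so the step fails as written. The paper avoids this by first rescaling to $\int_{\tau\sqrt{V}}^\infty e^{-x^2/2}x^m\,\mathrm{d}x$ and then using the uniform bound $x^m e^{-x^2/2}\leq Ce^{-x}$, valid for all integers $m$ once $x\geq 1$ (and $\tau\sqrt{V}\to\infty$ guarantees the lower limit eventually exceeds $1$). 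Your splitting idea can be salvaged for $m<0$ too: on $[\tau,\infty)$ use $\varphi^m\leq\tau^m$, bound $\tau^m\leq C\,n^{-\alpha m}$ (polynomial in $n$) via $\tau\geq c\,n^{-\alpha}$, and then apply your argument with $m=0$; the super-polynomial factor $e^{-C' n^{1-2\alpha}}$ still dominates.
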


\begin{proof}
  We first note that by our hypotheses there is some $\delta>0$ such
  that, for large $n$,
\begin{equation}\label{lowerest}
V\geq \delta n\quad\text{and}\quad \tau\sqrt{V}\geq \delta n^{1/2-\alpha}.
\end{equation}
We set $x=\sqrt{V}\varphi$. Then the integrals become
\begin{equation*}
I_m=V^{-(m+1)/2}\int_{0}^{\infty}e^{-\frac{1}{2}x^2}x^m\,\mathrm{d}x
\end{equation*}
and
\begin{equation*}
I_m^*=V^{-(m+1)/2}\int_{\tau\sqrt{V}}^{\infty}e^{-\frac{1}{2}x^2}x^m\,\mathrm{d}x.
\end{equation*}
The integrands can be bounded by a function~$C e^{-x}$ if $m\geq 0$,
or if $m<0$ and $x\geq 1$. Since, by \eqref{lowerest},
$\tau\sqrt{V}\to\infty$ as $n\to\infty$, we obtain that
\begin{equation*}
I_m\leq C V^{-(m+1)/2}
\end{equation*}
and
\begin{equation*}
I_m^*\leq C V^{-(m+1)/2}e^{-\tau\sqrt{V}}.
\end{equation*}
Thus we deduce with \eqref{lowerest} that
\begin{equation*}
I_m\leq C n^{-(m+1)/2}
\end{equation*}
and, for any~$k$,
\begin{equation*}
I_m^*\leq C n^{-k}.
\end{equation*}
\end{proof}

We can now prove our main result.

\begin{proof}[Proof of Theorem~\ref{thm-general}]
  The proof will be divided into several steps.

  (1) In view of Uspensky's representation stated above the value
  $J(y)$ plays a crucial role, see \eqref{fct-J}. We will split the
  integral into two parts. Let
\[
\tau=V^{-1/4},
\]
so that $\tau^{-1}=O(n^{1/4})$.  Throughout the proof we suppose
that~$n$ is large enough to have that $\tau\leq \frac{\pi}{2}$. Then
\begin{equation*}
J^{*}(y):=\Big|\frac{1}{2\pi}\int_{\tau}^{\pi} 
	\rho^n\,\frac{\sin(y\sqrt{V}\varphi-\chi)}
	{\sin\frac{\varphi}{2}}\,\mathrm{d}\varphi\Big|
\leq \frac{1}{2\pi}\int_{\tau}^{\pi} \frac{\rho^n}{\sin\frac{\varphi}{2}}
	\,\mathrm{d}\varphi.
\end{equation*}
With~(\ref{lemme-lnrho}) we have that
\begin{equation*}\label{ineg-nlogrho}
n\log\rho\leq -2V\sin^2\frac{\varphi}{2}.
\end{equation*}
Applying the fact that $\sin\frac{\varphi}{2}\geq\frac{\varphi}{\pi}$
for $0\leq\varphi\leq\pi$, we obtain that
\begin{equation*}
  J^{*}(y)\leq\frac{1}{2\pi}\int_{\tau}^{\pi}\frac{e^{-2V\sin^2\frac{\varphi}{2}}}{\sin\frac{\varphi}{2}}
\,\mathrm{d}\varphi
  \leq\frac{1}{2}\int_{\tau}^{\pi}e^{-\frac{2V}{\pi^2}\varphi^2}
  \frac{1}{\varphi}\,\mathrm{d}\varphi
  \leq\frac{1}{2}\int_{\tau}^{\infty}e^{-\frac{2V}{\pi^2}\varphi^2}
  \frac{1}{\varphi}\,\mathrm{d}\varphi.
\end{equation*}
Substituting $x=\frac{2}{\pi}\varphi$ and applying
Lemma~\ref{lemme-ebnphi}, we have that
\begin{equation}\label{approx-int-Jstar}
  J^{*}(y)\leq\frac{1}{2}\int_{\frac{2}{\pi}\tau}^{\infty}
  e^{-\frac{1}{2}Vx^2}\frac{1}{x}\,\mathrm{d}x
  =O\Big(\frac{1}{n^{5/2}}\Big)
\end{equation}
since $(\frac{2}{\pi}\tau)^{-1}=O(n^{1/4})$. From~(\ref{fct-J})
and~(\ref{approx-int-Jstar}) we then obtain that
\begin{equation}\label{fct-J-zero-tau}
  J(y)=\frac{1}{2\pi}\int_0^{\tau}\rho^n\,
  \frac{\sin(y\sqrt{V}\varphi-\chi)}
  {\sin\frac{\varphi}{2}}\,\mathrm{d}\varphi+O\Big(\frac{1}{n^{5/2}}\Big).
\end{equation}

(2) Looking at the integrand of $J(y)$ we now want to estimate
$\sin(a-\chi)$ in powers of~$\varphi$, where $a\in\mathbb{R}$; recall
that
\[
\omega = \arg (pe^{i\varphi}+q)\quad\text{and}\quad \chi=n\omega-np\varphi.
\]
By Taylor expansion we have that
\begin{equation}\label{lemme-sin1}
\begin{split}
\sin(a-\chi)&=
\sin(a)
-\cos(a)\chi
-\frac{1}{2}\sin(a)\chi^2
+\frac{1}{6}\cos(a)\chi^3\\
&\quad +\frac{1}{24}\sin(a)\chi^4
-\frac{1}{120}\cos(a-\eta)\chi^5
\end{split}
\end{equation}
for some $\eta$ between 0 and $\chi$. Since by~\eqref{fct-J-zero-tau}
it suffices to assume that $0\leq \varphi\leq \tau\leq\frac{\pi}{2}$
we see that
\[
\omega=\arctan\frac{p\sin\varphi}{p\cos\varphi+q},
\]
so that
\[
\chi= n\arctan\frac{p\sin\varphi}{p\cos\varphi+q} -np\varphi.
\]
Now,
\[
\frac{d}{d\varphi}\Big(\frac{\chi(\varphi)}{n}\Big) = \frac{1}{2}-p + \frac{p-q}{2}\frac{1}{1+2pq(\cos\varphi-1)},
\]
where we have used that $p+q=1$. Note that, as $x\to 0$,
\[
\frac{1}{1+2pqx} = 1-2pqx+(2pq)^2x^2 +O(x^3),
\]
where the constant in the big-O condition does not depend on $n$ since
$pq$ is bounded in $n$. Thus the Taylor expansion
\[
\cos\varphi - 1 = -\frac{\varphi^2}{2}+\frac{\varphi^4}{24} + O(\varphi^6)
\]
gives us that
\begin{align*}
\frac{d}{d\varphi}\Big(\frac{\chi(\varphi)}{n}\Big) &= \frac{1}{2}-p + \frac{p-q}{2}\Big(1-2pq\Big(-\frac{\varphi^2}{2}+\frac{\varphi^4}{24}\Big) + (2pq)^2\frac{\varphi^4}{4} + O(\varphi^6)\Big)\\
&= \frac{1}{2} pq (p-q) \varphi^2 - \frac{1}{24}pq(p-q)(1-12pq)\varphi^4+ O(\varphi^6)
\end{align*}
and hence
\begin{equation}\label{lemme-chi}
\chi =\chi_3\varphi^3+\chi_5\varphi^5+nO(\varphi^7)
\end{equation}
with
\begin{itemize}
\setlength{\itemsep}{1mm}
\item[$\chi_3=$\!] $\frac{1}{6}V(p-q)$,
\item[$\chi_5=$\!] $-\frac{1}{120}V(p-q)(1-12pq)$.
\end{itemize}
Applying~(\ref{lemme-sin1}) and~(\ref{lemme-chi}) we obtain
\begin{equation*}
\begin{split}
\sin(a\!-\!\chi)\!&=\!
	\sin(a)
	\!-\!\cos(a)
		(\chi_3\varphi^3\!+\!\chi_5\varphi^5\!+\!nO(\varphi^7))\\
&\quad -\!\frac{1}{2}\sin(a)(\chi_3\varphi^3\!+\!\chi_5\varphi^5\!+\!nO(\varphi^7))^2
	\!+\!\frac{1}{6}\cos(a)(\chi_3\varphi^3\!
	+\!nO(\varphi^5))^3\\
&\quad +\!\frac{1}{24}\sin(a)(\chi_3\varphi^3\!
	+\!nO(\varphi^5))^4
	\!-\!\frac{1}{120}\cos(a\!-\!\eta)(nO(\varphi^3))^5,
\end{split}
\end{equation*}
and hence
\begin{equation}\label{lemme-sin2}
\begin{split}
\sin(a\!-\!\chi)\!&=\!
\sin(a)\!-\!\cos(a)\chi_3\varphi^3
	\!-\!\cos(a)\chi_5\varphi^5
	\!-\!\frac{1}{2}\sin(a)\chi_3^2\varphi^6\\
&\quad \!-\!\sin(a)\chi_3 \chi_5\varphi^8
	\!+\!\frac{1}{6}\cos(a)\chi_3^3\varphi^9
	\!+\!\frac{1}{24}\sin(a)\chi_3^4\varphi^{12}\\
&\quad +\sum_{k=1}^5 O(n^k\varphi^{2k+5}).
\end{split}
\end{equation}

(3) Next, using Laurent expansion, we have that
\begin{equation*}
\frac{1}{\sin\frac{\varphi}{2}}=\frac{2}{\varphi}+\frac{1}{12}\varphi
	+\frac{7}{2880}\varphi^3+O(\varphi^5),
\end{equation*}
which together with Lemma \ref{lemme-da-R} and the fact that
$R=\rho^n$ gives that
\begin{equation}\label{lemme-rsin}
\begin{split}
\frac{\rho^n}{\sin\frac{\varphi}{2}}&=
e^{R_2\varphi^2}\Big(\frac{2}{\varphi}
	+\frac{1}{12}\varphi
	+\Big[\frac{7}{2880}+2R_4\Big]\varphi^3
	+\Big[\frac{1}{12}R_4+2R_6\Big]\varphi^5\\
&\qquad
	+R_4^2\varphi^7
	+\sum_{k=0}^3 O(n^k\varphi^{2k+5})
\Big).
\end{split}
\end{equation}
We can now rewrite the integrand of~(\ref{fct-J-zero-tau}). Setting
\[
y\sqrt{V}\varphi=\alpha\varphi=a
\]
we obtain, by combining~(\ref{lemme-sin2}) and~(\ref{lemme-rsin}),
\begin{equation}\label{da-rsinphi}
\begin{split}
\rho^n\,\frac{\sin(y\sqrt{V}\varphi-\chi)}{\sin\frac{\varphi}{2}}&=\\
	&e^{R_2\varphi^2}\Big(
	\frac{2}{\varphi}\sin(\alpha\varphi)+\sum_{k=1}^{11}J_k\varphi^k
	+\sum_{k=1}^8 O(n^k\varphi^{2k+4})\Big),
\end{split}
\end{equation}
where $J_{10}=0$ and
\begin{multicols}{2}
\begin{itemize}
\setlength{\itemsep}{1mm}
\item[$J_1=$\!] $\frac{1}{12}\sin(\alpha\varphi)$,
\item[$J_2=$\!] $-2\chi_3\cos(\alpha\varphi)$,
\item[$J_3=$\!] $\big[\frac{7}{2880}+2R_4\big]\sin(\alpha\varphi)$,
\item[$J_4=$\!] $-\big[\frac{1}{12}\chi_3+2\chi_5\big]\cos(\alpha\varphi)$,
\item[$J_5=$\!] $\big[\frac{1}{12}R_4+2R_6-\chi_3^2\big]
	\sin(\alpha\varphi)$,
\item[$J_6=$\!] $-2R_4\chi_3\cos(\alpha\varphi)$,
\item[$J_7=$\!] $\big[R_4^2-2\chi_3\chi_5-\frac{1}{24}\chi_3^2\big]
	\sin(\alpha\varphi)$,
\item[$J_8=$\!] $\frac{1}{3}\chi_3^3\cos(\alpha\varphi)$,
\item[$J_9=$\!] $-R_4\chi_3^2\sin(\alpha\varphi)$,
\item[$J_{11}=$\!] $\frac{1}{12}\chi_3^4\sin(\alpha\varphi)$.
\end{itemize}
\end{multicols}

(4) We will estimate~(\ref{fct-J-zero-tau}) using the form of the
integrand what we have obtained in~(\ref{da-rsinphi}).

We begin with the error terms. Applying Lemma~\ref{lemme-ebnphi}, we
have that
\begin{equation}\label{simpl-erreurs-int}
\Big|\int_0^{\tau}e^{R_2\varphi^2} f_k\,\mathrm{d}\varphi\Big|
\leq C n^k\int_0^{\infty}e^{-\frac{1}{2}V\varphi^2}\varphi^{2k+4}\,
\mathrm{d}\varphi=O\Big(\frac{1}{n^{5/2}}\Big),
\end{equation} 
with~$f_k=O(n^k\varphi^{2k+4})$, $k$ an integer between~1 and~8.

For the main part in~(\ref{da-rsinphi}) we can replace the integral on
$[0,\tau]$ by one on $[0,\infty)$ with an error of at most
$O(n^{-5/2})$. Indeed,
\begin{align*}
J^{**}(y)&:=
\Big|\int_{\tau}^{\infty}\!\!\!e^{R_2\varphi^2} \Big(
	\frac{2}{\varphi}\sin(\alpha\varphi)\!+\!\sum_{k=1}^{11} 
	J_k\varphi^k\Big)\mathrm{d}\varphi\Big|\\
&\leq \int_{\tau}^{\infty}\!\!\!e^{-\frac{1}{2}V\varphi^2} \Big(
	\frac{2}{\varphi}\!+\!\sum_{k=1}^{11} |J_k|\varphi^k\Big)
	\mathrm{d}\varphi.
\end{align*}
Using the definitions of the coefficients~$J_k$, and noting that
$\tau^{-1}=O(n^{1/4})$, we obtain with Lemma~\ref{lemme-ebnphi} that
\begin{equation}\label{simp-integ-tau-inf}
J^{**}(y)=O\Big(\frac{1}{n^{5/2}}\Big).
\end{equation}

Applying~(\ref{da-rsinphi}), (\ref{simpl-erreurs-int})
and~(\ref{simp-integ-tau-inf}) to~(\ref{fct-J-zero-tau}), we derive
that
\begin{equation*}
J(y)=\frac{1}{2\pi}\int_0^{\infty}e^{R_2\varphi^2}
	\frac{2}{\varphi}\sin(\alpha\varphi)\mathrm{d}\varphi
+\frac{1}{2\pi}\sum_{k=1}^{11}\int_0^{\infty}e^{R_2\varphi^2} 
	J_k\varphi^k\mathrm{d}\varphi
+O\Big(\frac{1}{n^{5/2}}\Big)
\end{equation*}
with $\alpha=y\sqrt{V}$ and~$R_2=-\frac{1}{2}V$. 

It remains to evaluate these integrals, which we have relegated to the
Appendix~\ref{annexe-fourier}. After simplification, using in
particular that $(p-q)^2=1-4pq$, we obtain that
\begin{equation}\label{simp-J-xi}
J(y)=\Phi(y)-\frac{1}{2}+\frac{e^{-\frac{1}{2}y^2}}{\sqrt{2\pi}}
	\Big(\frac{P_1(y)}{\sqrt{V}}+\frac{P_2(y)}{V}
	+\frac{P_3(y)}{V^{3/2}}+\frac{P_4(y)}{V^2}\Big)
	+O\Big(\frac{1}{n^{5/2}}\Big)
\end{equation}
with
\begin{itemize}[itemsep=1mm,leftmargin=1.8cm]
\item[$P_1(y)=$\!] $\frac{1}{6}(q-p)(1-y^2)$,
\item[$P_2(y)=$\!] $y[\frac{1}{72}(-3+7y^2-y^4)
	-\frac{pq}{36}(-3+11y^2-2y^4)]$,
\item[$P_3(y)=$\!] $(q-p)[
	\frac{1}{6480}(123+129y^2-384y^4+95y^6-5y^8)\\ 
	-\frac{pq}{3240}(3+69y^2-399y^4+145y^6-10y^8)]$,
\item[$P_4(y)=$\!] $y[
	\frac{1}{155520}(-4293-1359y^2+6165y^4-1971y^6+185y^8
		-5y^{10})\\
	+\frac{pq}{38880}(3105+1395y^2-7794y^4+2979y^6
		-325y^8+10y^{10})\\
	+\frac{p^2 q^2}{38880}(135-1035y^2+7947y^4-4167y^6
		+560y^8-20y^{10})]$.
\end{itemize}

(5) Finally, by Uspensky, we have that
\begin{equation}\label{usp}
\sum_{k=0}^j \binom{n}{k}p^k q^{n-k} = J(y)-J(y'),
\end{equation}
where 
\[
y=\frac{j-np+\frac{1}{2}}{\sqrt{V}}\quad\text{and}\quad
y'=-\frac{np+\frac{1}{2}}{\sqrt{V}}.
\]
Since $0<\liminf_{n\to\infty}p_n\leq \limsup_{n\to\infty}p_n<1$ there
are $\delta>0$ and $C>0$ such that, for large $n$,
\[
 \delta n\leq V\leq n \quad\text{and}\quad\delta \sqrt{n}\leq |y'| \leq C\sqrt{n}.
\]
It follows that for each $k=1,\dots,4$
\begin{equation}\label{pk}
\frac{e^{-\frac{1}{2}(y')^2}}{\sqrt{2\pi}}\frac{P_k(y')}{V^{k/2}}
=O\Big(\frac{1}{n^{5/2}}\Big).
\end{equation}
Moreover, for $x\leq -2$, $-\frac{1}{2}x^2\leq x$. Therefore, if $n$
is sufficiently large, then
\begin{equation}\label{phi}
\Phi(y') \leq\frac{1}{\sqrt{2\pi}}\int_{-\infty}^{y'}e^{x}\mathrm{d}x
=\frac{1}{\sqrt{2\pi}}e^{y'}=O\Big(\frac{1}{n^{5/2}}\Big),
\end{equation}
so that
\begin{equation}\label{jay}
J(y')=-\frac{1}{2}+O\Big(\frac{1}{n^{5/2}}\Big).
\end{equation}
Now the theorem follows from \eqref{usp}, \eqref{simp-J-xi} and
\eqref{jay}.
\end{proof}

We can deduce a first easy corollary.

\begin{corollaire}
  Suppose that $p=p_n\to p_0$ with $0<p_0<1$ and that $j=j_n$
  satisfies $\frac{j}{n}\to j_0$. Then
\[
\begin{aligned}
&\sum_{k=0}^j \binom{n}{k}p^k (1-p)^{n-k}=O\Big(\frac{1}{n^{5/2}}\Big)
	&\text{if }j_0<p_0,\\
&\sum_{k=0}^j \binom{n}{k}p^k (1-p)^{n-k}=1+O\Big(\frac{1}{n^{5/2}}\Big)
	&\text{if }j_0>p_0.
\end{aligned}
\]
\end{corollaire}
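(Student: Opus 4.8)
The plan is to read the result straight off the expansion in Theorem~\ref{thm-general}. Its hypothesis holds here: since $p_n\to p_0\in(0,1)$, we have $0<\liminf_{n\to\infty}p_n\leq\limsup_{n\to\infty}p_n<1$, so the expansion applies with $V=np_nq_n$, $y=\frac{j-np+1/2}{\sqrt V}$ and the fixed polynomials $P_1,\dots,P_4$, up to an error $O(n^{-5/2})$. The one quantitative point to pin down is the size of $y$. Because $p_nq_n\to p_0(1-p_0)>0$, we have $V\asymp n$, hence $\sqrt V\asymp\sqrt n$; and $j-np=n(\tfrac{j}{n}-p)\sim n(j_0-p_0)$. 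Since $j_0\neq p_0$ in both cases, for large $n$ one gets $|j-np+\tfrac12|\geq c\,n$ for some $c>0$, and therefore $y\sim \frac{j_0-p_0}{\sqrt{p_0(1-p_0)}}\sqrt n$; in particular $|y|\geq C\sqrt n$ for some $C>0$, with the sign of $y$ equal to that of $j_0-p_0$.

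Case $j_0<p_0$: here $y\leq -C\sqrt n$ for large $n$, which is exactly the situation treated for $y'$ in step~(5) of the proof of Theorem~\ref{thm-general}. By the bound used there (namely $-\tfrac12 x^2\leq x$ for $x\leq -2$, see \eqref{phi}) we get $\Phi(y)\leq \frac{1}{\sqrt{2\pi}}e^{y}\leq \frac{1}{\sqrt{2\pi}}e^{-C\sqrt n}$, which is $o(n^{-k})$ for every $k$, and in particular $O(n^{-5/2})$. Each correction term $\frac{e^{-y^2/2}}{\sqrt{2\pi}}\frac{P_k(y)}{V^{k/2}}$ is, since $P_k$ is a fixed polynomial and $|y|\asymp\sqrt n$, $V^{k/2}\asymp n^{k/2}$, bounded by a polynomial in $n$ times $e^{-C^2 n/2}$, hence again $O(n^{-5/2})$ (cf.\ \eqref{pk}). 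Adding these contributions to the $O(n^{-5/2})$ remainder yields the first claim.

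Case $j_0>p_0$: now $y\geq C\sqrt n$. Writing $\Phi(y)=1-\Phi(-y)$ and applying the same tail bound to $-y\leq -C\sqrt n$ gives $\Phi(y)=1+O(n^{-5/2})$, while the correction terms are bounded exactly as in the previous case. This yields the second claim.

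I expect essentially no obstacle here beyond bookkeeping: the only step needing care is confirming that $y$ genuinely scales like $\pm\sqrt n$ rather than more slowly, which rests entirely on $\liminf_{n\to\infty}p_nq_n>0$ so that $V\asymp n$ and $\sqrt V\asymp\sqrt n$. Everything else is the super-polynomial decay of the Gaussian tail and of $e^{-y^2/2}$ against polynomial factors, already exploited in the proof of Theorem~\ref{thm-general}.
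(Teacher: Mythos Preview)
Your proof is correct and follows essentially the same approach as the paper: apply Theorem~\ref{thm-general}, show that $y/\sqrt{n}\to (j_0-p_0)/\sqrt{p_0(1-p_0)}\neq 0$, and then invoke the Gaussian tail bounds \eqref{pk} and \eqref{phi} from step~(5) of that proof to dispatch both $\Phi(y)$ (or $1-\Phi(y)$) and the correction terms as $O(n^{-5/2})$. The paper's argument is slightly terser but identical in structure.
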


\begin{proof}
  From our assumptions it follows that
\begin{equation*}\label{da-y-cor}
\frac{y}{\sqrt{n}}=\frac{j-np+\frac{1}{2}}{n\sqrt{p(1-p)}} \to
\frac{j_0-p_0}{\sqrt{p_0(1-p_0)}}.
\end{equation*}
Then as in \eqref{pk} we have that, for $k=1,\ldots,4$,
\[
\frac{e^{-\frac{1}{2}y^2}}{\sqrt{2\pi}}\frac{P_k}{V^{k/2}} 
=O\Big(\frac{1}{n^{5/2}}\Big).
\]
Moreover, if $j_0<p_0$ then as in \eqref{phi} we have that
\[
\Phi(y)=O\Big(\frac{1}{n^{5/2}}\Big),
\]
while if $j_0>p_0$, we have that
\[
\Phi(y)= 1-\Phi(-y) = 1+ O\Big(\frac{1}{n^{5/2}}\Big).
\]
The claim now follows from Theorem~\ref{thm-general}.
\end{proof}

The second corollary will be crucial in our applications to lookback
options in the next section.

\begin{corollaire}\label{cor-approx-binom}
Suppose that 
\[
p=\frac{1}{2}+\frac{\alpha}{\sqrt{n}}+\frac{\beta}{n}
	+\frac{\gamma}{n^{3/2}}+\frac{\delta}{n^2}
	+\frac{\varepsilon}{n^{5/2}}
	+O\Big(\frac{1}{n^3}\Big)
\]
and 
\[
j=\frac{n}{2}+a\sqrt{n}+\frac{1}{2}+b_n+\frac{c}{\sqrt{n}}+\frac{d}{n}
	+\frac{e}{n^{3/2}}
	+O\Big(\frac{1}{n^2}\Big),
\]
where $(b_n)_n$ is a bounded sequence. Then
\[
\begin{split}
&\sum_{k=j}^n \binom{n}{k}p^k (1-p)^{n-k}
=\Phi(A)+\frac{e^{-\frac{1}{2}A^2}}{\sqrt{2\pi}}\Big(
	\frac{B_n}{\sqrt{n}}
	+\frac{C_0-C_2 B_n^2}{n}\\
&\quad +\frac{D_0-D_1 B_n-D_3 B_n^3}{n^{3/2}}
	+\frac{E_0-E_1 B_n-E_2 B_n^2+E_4 B_n^4}{n^2}\Big)
	+O\Big(\frac{1}{n^{5/2}}\Big),\\
\end{split}
\]
where
\begin{itemize}
\setlength{\itemsep}{1mm}
\item[$C_0=$\!] $2\alpha^2 A-(1-A^2)(A-8\alpha)/12+C$,
\item[$C_2=$\!] $A/2$,
\item[$D_0=$\!] $4\alpha\beta A+2(1-A^2)\beta/3+D$,
\item[$D_1=$\!] $(8\alpha A-1)/6-(1-A^2)
	(A^2-8\alpha A+24\alpha^2-3)/12+AC$,
\item[$D_3=$\!] $(1-A^2)/6$,
\item[$E_0=$\!] $2(\beta^2+2\alpha\gamma)A+(1-A^2)(6\alpha^2 C+2\gamma)/3
	+(3-A^2)(6\alpha^3-2C)\alpha A/3+(A^4-4A^2+1)(16\alpha^3-C)/12
	-(5A^6-53A^4+33A^2+171)A/1440+(5A^6-41A^4+21A^2+27)\alpha/90
	-(7A^4-40A^2+15)\alpha^2 A/18-AC^2/2+E$,
\item[$E_1=$\!] $4\beta A/3+(1-A^2)(2A-12\alpha)\beta/3+AD$,
\item[$E_2=$\!] $2\alpha^2 A+(1-A^2)(C+2\alpha^2 A)/2-(A^4-8A^2+9)A/24
	+(A^4-6A^2+3)\alpha/3$,
\item[$E_4=$\!] $(3-A^2)A/24$,
\end{itemize}
with
$A=2(\alpha-a)$, $B_n=2(\beta-b_n)$, $C=2(\gamma-c)$, $D=2(\delta-d)$,
$E=2(\varepsilon-e)$.
\end{corollaire}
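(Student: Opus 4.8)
The plan is to reduce the statement to Theorem~\ref{thm-general}. By the binomial symmetry $\binom{n}{k}=\binom{n}{n-k}$, re-indexing $i=n-k$ gives
\[
\sum_{k=j}^n \binom{n}{k}p^k(1-p)^{n-k}=\sum_{i=0}^{n-j}\binom{n}{i}q^i p^{n-i},
\]
that is, the tail sum is the ordinary binomial cumulative distribution function with success probability $q=1-p$ evaluated at $n-j$. Applying Theorem~\ref{thm-general} with $p$ replaced by $q$ and $j$ by $n-j$ (note that $V=npq$ is unchanged) produces an expansion whose argument is
\[
w:=\frac{np-j+\tfrac12}{\sqrt V}.
\]
Since $P_1$ and $P_3$ carry a factor $q-p$ that changes sign under $p\leftrightarrow q$, while $P_2,P_4$ depend on $p,q$ only through the symmetric product $pq$, this yields
\[
\sum_{k=j}^n\binom{n}{k}p^k q^{n-k}
=\Phi(w)+\frac{e^{-\frac12 w^2}}{\sqrt{2\pi}}\Big(-\frac{P_1(w)}{\sqrt V}+\frac{P_2(w)}{V}-\frac{P_3(w)}{V^{3/2}}+\frac{P_4(w)}{V^2}\Big)+O\Big(\frac{1}{n^{5/2}}\Big).
\]

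Next I would expand everything in powers of $n^{-1/2}$. From $pq=\tfrac14-(p-\tfrac12)^2$ and the hypothesis on $p$ one gets $V=\frac n4-\alpha^2-\frac{2\alpha\beta}{\sqrt n}-\cdots$, hence $\frac{1}{\sqrt V}=\frac{2}{\sqrt n}\big(1+\frac{2\alpha^2}{n}+\frac{4\alpha\beta}{n^{3/2}}+\cdots\big)$, and similarly for $V^{-1},V^{-3/2},V^{-2}$. Using the hypotheses on $p$ and $j$,
\[
np-j+\tfrac12=\tfrac{A}{2}\sqrt n+\tfrac{B_n}{2}+\tfrac{C}{2\sqrt n}+\tfrac{D}{2n}+\tfrac{E}{2n^{3/2}}+O\Big(\frac{1}{n^2}\Big),
\]
with $A,B_n,C,D,E$ as defined. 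Multiplying by $1/\sqrt V$ gives
\[
w=A+\frac{B_n}{\sqrt n}+\frac{W_2}{n}+\frac{W_3}{n^{3/2}}+\frac{W_4}{n^2}+O\Big(\frac{1}{n^{5/2}}\Big),
\]
where the leading coefficient is exactly $A=2(\alpha-a)$ and the $n^{-1/2}$-coefficient is exactly $B_n=2(\beta-b_n)$ (the first correction to $1/\sqrt V$ only enters at order $n^{-1}$), while $W_2=C+2\alpha^2A$, $W_3=D+2\alpha^2B_n+4\alpha\beta A$, $W_4$ are explicit polynomials in $A,B_n,C,D,E$ and $\alpha,\beta,\ldots$. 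This is the precision needed: since $np-j+\tfrac12$ is known up to $O(n^{-2})$ and $1/\sqrt V=O(n^{-1/2})$, the error in $w$ is $O(n^{-5/2})$, so $\Phi(w)$ and the bracketed terms are captured to the required order.

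Then I would Taylor-expand about $A$, treating the bounded sequence $B_n$ as a formal variable (it cannot be absorbed into a constant since it need not converge, but $A$ itself is free of $b_n$, which is why $\Phi(A)$ and $e^{-A^2/2}$ are genuine constants). Writing $h:=w-A=O(n^{-1/2})$ I use
\[
\Phi(w)=\Phi(A)+\Phi'(A)h+\tfrac12\Phi''(A)h^2+\tfrac16\Phi'''(A)h^3+\tfrac1{24}\Phi^{(4)}(A)h^4+O(h^5),
\]
with $\Phi'(A)=\frac{e^{-A^2/2}}{\sqrt{2\pi}}$, $\Phi''(A)=-A\,\Phi'(A)$, $\Phi'''(A)=(A^2-1)\Phi'(A)$, $\Phi^{(4)}(A)=A(3-A^2)\Phi'(A)$; I expand $e^{-w^2/2}=e^{-A^2/2}e^{-Ah-\frac12h^2}$ likewise, and each polynomial $P_k(w)$ about $A$ to the order matching its prefactor $V^{-k/2}$. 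Substituting the expansion of $h$ and the series for the powers of $V$, collecting by powers of $n^{-1/2}$, and factoring out $\frac{e^{-A^2/2}}{\sqrt{2\pi}}$ leaves a bracket that is polynomial in $B_n$ of the predicted shape; matching coefficients gives $C_0,\ldots,E_4$. Several coefficients can be read off as partial checks. The pure $B_n^2,B_n^3,B_n^4$ terms come solely from the $\Phi$-expansion, giving $C_2=\tfrac12 A$, $D_3=\tfrac16(1-A^2)$ and $E_4=\tfrac1{24}A(3-A^2)$, exactly as stated; and at order $n^{-1}$ the $B_n$-free term collects $\Phi'(A)W_2$ (contributing $2\alpha^2A+C$), the leading part of $-P_1(w)/\sqrt V$ (contributing $\tfrac{2\alpha}{3}(1-A^2)$) and of $P_2(w)/V$ (contributing $-\tfrac1{12}A(1-A^2)$), whose sum is precisely $C_0$.

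The main obstacle is purely computational: the polynomials $P_3,P_4$ have degree $8$ and $10$ in the argument, $w$ must be carried to order $n^{-2}$, and every contribution must be tracked through the $n^{-2}$ term with $O(n^{-5/2})$ error. The most efficient organisation is to group all terms by their degree in the formal variable $B_n$; the $B_n^k$-homogeneous pieces isolate the four Taylor contributions of $\Phi$ and serve as consistency checks, while the constant and lower-degree pieces ($C_0,D_0,D_1,E_0,E_1,E_2$) mix the expansions of $V$, of $w$, and of the $P_k$, and are best verified with a computer algebra system.
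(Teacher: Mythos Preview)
Your proposal is correct and follows essentially the same route as the paper: reduce to Theorem~\ref{thm-general}, identify the argument (your $w$ is the paper's $-y$), expand $V$, the argument, $\Phi$, $e^{-(\cdot)^2/2}$ and the $P_k$ about $A$, and collect terms in powers of $n^{-1/2}$. The only cosmetic difference is that you pass to the complementary sum via the binomial symmetry $\binom{n}{k}=\binom{n}{n-k}$ (swapping $p\leftrightarrow q$), whereas the paper writes $\sum_{k=j}^n=1-\sum_{k=0}^{j-1}$ and then uses $1-\Phi(y)=\Phi(-y)$; the parity of the $P_k$ makes the two routes identical.
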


\begin{proof}
  The proof is based on the same ideas as Lemma~3.1 of
  \cite{heuw1}. As it is very computational, we will just recall the
  main steps for obtaining the result.

We first note that by Theorem~\ref{thm-general},
\[
\begin{split}
  &\sum_{k=j}^n \binom{n}{k}p^k (1-p)^{n-k}= 1- \sum_{k=0}^{j-1}
\binom{n}{k}p^k (1-p)^{n-k}\\
  &\quad=1-\Phi(y)- \frac{e^{-\frac{1}{2}y^2}}{\sqrt{2\pi}}
  \Big(\frac{P_1}{\sqrt{V}}+\frac{P_2}{V}
  +\frac{P_3}{V^{3/2}}+\frac{P_4}{V^2}\Big)
  +O\Big(\frac{1}{n^{5/2}}\Big)
\end{split}
\]
with $y=\frac{j-np-1/2}{\sqrt{V}}$, where $P_1,\ldots,P_4$ and $V$ are
as in that theorem. The goal is then to give an asymptotic expansion
of each term in this sum.

We begin with $1-\Phi(y)=\Phi(-y)$.  This term is an integral that we
will decompose into two parts: one from~$-\infty$ to~$A$ (giving
$\Phi(A)$) and second one from~$A$ to~$-y$. This splitting is
motivated by the convergence of~$y$ to~$-A$ as~$n$ tends to
infinity. Then we find an asymptotic expansion for the second integral
by using a Taylor expansion about~$A$ in which we substitute~$y$ by
its asymptotic expansion.

We finish with all the other terms. For each of them, we
substitute~$V$ and~$y$ by their respective asymptotic expansions.
\end{proof}

\begin{rmq}\label{rmq-approx-binom}
We have stated the result for
\[
\binc{n,p}(j-1)=\sum_{k=j}^n\binom{n}{k}p^k (1-p)^{n-k},
\]
in line with the results in~\cite{chang_palmer}, \cite{lin_palmer}
and~\cite{heuw1}.  The corresponding result for the binomial
cumulative distribution function
\[
\bin{n,p}(j)=\sum_{k=0}^j\binom{n}{k}p^k (1-p)^{n-k}
\]
is obtained by writing~$j$ in the form
\[
j=\frac{n}{2}+a\sqrt{n}-\frac{1}{2}+b_n+\frac{c}{\sqrt{n}}+\frac{d}{n}
	+\frac{e}{n^{3/2}}+O\Big(\frac{1}{n^2}\Big)
\]
and taking 1 minus the result given in the corollary.
\end{rmq}

\section{Asymptotics of the price for lookback options}   

In this section we combine the results of the previous sections in
order to derive asymptotic expansions for the price of lookback
options. We will use the notation of Section 2. In addition we adopt
the following notations that are in line with common usage in the
literature:
\begin{itemize}
\setlength{\itemsep}{1mm}
\item[$d_1=$\!] $\frac{1}{\sigma\sqrt{\tau}}
	\big(\log\frac{S_t}{M_t}+(r+\frac{\sigma^2}{2})\tau\big)$,
\item[$d_2=$\!] $\frac{1}{\sigma\sqrt{\tau}}
	\big(\log\frac{S_t}{M_t}+(r-\frac{\sigma^2}{2})\tau\big)
	=d_1-\sigma\sqrt{\tau}$,
\item[$d_3=$\!] $\frac{1}{\sigma\sqrt{\tau}}
	\big(-\log\frac{S_t}{M_t}+(r-\frac{\sigma^2}{2})\tau\big)
	=-d_1+\frac{2r}{\sigma}\sqrt{\tau}$,
\item[$d_4=$\!] $\frac{1}{\sigma\sqrt{\tau}}
	\big(-\log\frac{S_t}{M_t}+(r+\frac{\sigma^2}{2})\tau\big)
	=d_3+\sigma\sqrt{\tau}$,
\end{itemize}
and we will write
\begin{itemize}
\setlength{\itemsep}{1mm}
\item[$\kappa_n=$\!] $\{j_0\}(1-\{j_0\})$,
\end{itemize}
where, as before, $j_0=\frac{\log (S_t/M_t)}{\sigma\sqrt{\tau/n}}$.

We first note that the price $C_{n}^{fl}$ of a lookback call as a
function of $n$ shows some (mild) oscillations, see Figure
\ref{fig-lookcall}, which are caused by the fact that the non-integer
part $\{j_0\}$ of the initial level $j_0$ varies with $n$.

\begin{figure}[ht!]
\begin{center}
\begin{tikzpicture}
\begin{axis}[legend pos=south east,xmin=0,xmax=400,ymax=26.4,x=0.1225mm]
\addplot[black,mark=none] coordinates{
(2,26.03214307
)(3,26.13641891
)(4,26.14768195
)(5,26.12857513
)(6,26.17119325
)(7,26.18015818
)(8,26.20223747
)(9,26.20591513
)(10,26.17827979
)(11,26.21961199
)(12,26.22453074
)(13,26.23289796
)(14,26.23556936
)(15,26.21968503
)(16,26.2365696
)(17,26.24022441
)(18,26.25244771
)(19,26.25494469
)(20,26.25413294
)(21,26.25558161
)(22,26.24426151
)(23,26.25836656
)(24,26.26077572
)(25,26.26873609
)(26,26.27051603
)(27,26.27157213
)(28,26.27276916
)(29,26.2679088
)(30,26.26857359
)(31,26.26971773
)(32,26.27783859
)(33,26.2793569
)(34,26.28322149
)(35,26.28437927
)(36,26.28444491
)(37,26.28526465
)(38,26.28195367
)(39,26.28245875
)(40,26.281928
)(41,26.28804165
)(42,26.28919755
)(43,26.29268602
)(44,26.29361429
)(45,26.29470039
)(46,26.29541227
)(47,26.29430589
)(48,26.29481304
)(49,26.2917101
)(50,26.29339471
)(51,26.29438361
)(52,26.29839353
)(53,26.29922873
)(54,26.30158541
)(55,26.30227279
)(56,26.30309117
)(57,26.30363688
)(58,26.30302634
)(59,26.30343667
)(60,26.30150069
)(61,26.30178195
)(62,26.30119121
)(63,26.3046507
)(64,26.30533664
)(65,26.30766209
)(66,26.30824586
)(67,26.30950214
)(68,26.30998722
)(69,26.31023672
)(70,26.31062667
)(71,26.30992912
)(72,26.31022755
)(73,26.30863994
)(74,26.30885045
)(75,26.30871013
)(76,26.31133929
)(77,26.31187657
)(78,26.31372056
)(79,26.31418599
)(80,26.31528272
)(81,26.31567842
)(82,26.31606456
)(83,26.31639265
)(84,26.31610359
)(85,26.31636624
)(86,26.31543601
)(87,26.31563536
)(88,26.31409664
)(89,26.31555344
)(90,26.31601431
)(91,26.31783984
)(92,26.31824799
)(93,26.31952202
)(94,26.31987875
)(95,26.32062418
)(96,26.32093081
)(97,26.32116988
)(98,26.32142776
)(99,26.32118203
)(100,26.32139249
)(101,26.32068282
)(102,26.32084719
)(103,26.31969372
)(104,26.32026871
)(105,26.32064854
)(106,26.32220401
)(107,26.3225447
)(108,26.32369187
)(109,26.32399425
)(110,26.32474774
)(111,26.32501268
)(112,26.32538674
)(113,26.3256151
)(114,26.32562362
)(115,26.32581626
)(116,26.32547273
)(117,26.32563052
)(118,26.32494807
)(119,26.32507186
)(120,26.3240632
)(121,26.32511485
)(122,26.32541966
)(123,26.32662864
)(124,26.32690403
)(125,26.32780846
)(126,26.32805499
)(127,26.3286644
)(128,26.32888264
)(129,26.32920637
)(130,26.32939689
)(131,26.32944405
)(132,26.32960743
)(133,26.32938691
)(134,26.32952373
)(135,26.32904421
)(136,26.32915504
)(137,26.32842496
)(138,26.32851037
)(139,26.32864912
)(140,26.32975181
)(141,26.32999136
)(142,26.33085822
)(143,26.33107538
)(144,26.33171288
)(145,26.33190804
)(146,26.33232255
)(147,26.33249609
)(148,26.33269383
)(149,26.33284615
)(150,26.33283322
)(151,26.33296472
)(152,26.33274709
)(153,26.33285815
)(154,26.33244166
)(155,26.33253267
)(156,26.33192302
)(157,26.33199439
)(158,26.33208499
)(159,26.33300962
)(160,26.33321174
)(161,26.33395295
)(162,26.33413756
)(163,26.33469984
)(164,26.33486721
)(165,26.33525495
)(166,26.33540535
)(167,26.33562286
)(168,26.33575657
)(169,26.3358081
)(170,26.33592539
)(171,26.3358151
)(172,26.33591623
)(173,26.33564819
)(174,26.33573346
)(175,26.33531167
)(176,26.33538133
)(177,26.33480971
)(178,26.33534541
)(179,26.33552608
)(180,26.33625256
)(181,26.33641923
)(182,26.3370022
)(183,26.33715505
)(184,26.33759764
)(185,26.33773688
)(186,26.33804217
)(187,26.33816798
)(188,26.33833902
)(189,26.3384516
)(190,26.33849137
)(191,26.33859091
)(192,26.33850235
)(193,26.33858905
)(194,26.33837503
)(195,26.33844909
)(196,26.33811246
)(197,26.33817406
)(198,26.33771759
)(199,26.33784999
)(200,26.33800711
)(201,26.33865699
)(202,26.33880284
)(203,26.33933736
)(204,26.33947209
)(205,26.33989351
)(206,26.34001724
)(207,26.3403278
)(208,26.34044069
)(209,26.34064258
)(210,26.34074476
)(211,26.34084015
)(212,26.34093176
)(213,26.34092278
)(214,26.34100396
)(215,26.34089271
)(216,26.34096361
)(217,26.34075216
)(218,26.34081292
)(219,26.34050329
)(220,26.34055405
)(221,26.34014826
)(222,26.34042592
)(223,26.34055976
)(224,26.34110506
)(225,26.34122977
)(226,26.341682
)(227,26.34179769
)(228,26.34215849
)(229,26.34226525
)(230,26.34253626
)(231,26.34263421
)(232,26.34281702
)(233,26.34290626
)(234,26.34300247
)(235,26.3430831
)(236,26.34309427
)(237,26.3431664
)(238,26.34309408
)(239,26.34315781
)(240,26.34300351
)(241,26.34305894
)(242,26.34282416
)(243,26.34287141
)(244,26.34255763
)(245,26.34259678
)(246,26.3424561
)(247,26.34296523
)(248,26.34307727
)(249,26.34350971
)(250,26.34361431
)(251,26.34397129
)(252,26.34406853
)(253,26.34435127
)(254,26.34444123
)(255,26.34465093
)(256,26.34473368
)(257,26.34487154
)(258,26.34494716
)(259,26.34501434
)(260,26.34508291
)(261,26.34508057
)(262,26.34514217
)(263,26.34507147
)(264,26.34512617
)(265,26.34498822
)(266,26.34503611
)(267,26.34483204
)(268,26.34487319
)(269,26.3446041
)(270,26.34463858
)(271,26.3444461
)(272,26.34489408
)(273,26.34499253
)(274,26.34537716
)(275,26.34546944
)(276,26.34579165
)(277,26.34587784
)(278,26.34613855
)(279,26.34621869
)(280,26.34641881
)(281,26.34649296
)(282,26.34663339
)(283,26.34670161
)(284,26.34678323
)(285,26.34684559
)(286,26.34686929
)(287,26.34692583
)(288,26.34689247
)(289,26.34694327
)(290,26.34685371
)(291,26.34689881
)(292,26.3467539
)(293,26.34679337
)(294,26.34659394
)(295,26.34662784
)(296,26.34637472
)(297,26.34640311
)(298,26.34646071
)(299,26.34683636
)(300,26.34692132
)(301,26.34724447
)(302,26.3473243
)(303,26.34759566
)(304,26.3476704
)(305,26.34789066
)(306,26.34796037
)(307,26.34813022
)(308,26.34819493
)(309,26.34831507
)(310,26.34837483
)(311,26.34844593
)(312,26.34850079
)(313,26.34852351
)(314,26.34857351
)(315,26.34854853
)(316,26.34859372
)(317,26.34852168
)(318,26.3485621
)(319,26.34844366
)(320,26.34847936
)(321,26.34831515
)(322,26.34834617
)(323,26.34813685
)(324,26.34816323
)(325,26.34801105
)(326,26.34835585
)(327,26.34843233
)(328,26.34873291
)(329,26.34880506
)(330,26.34906196
)(331,26.34912982
)(332,26.34934359
)(333,26.34940719
)(334,26.34957835
)(335,26.34963773
)(336,26.34976683
)(337,26.34982202
)(338,26.34990957
)(339,26.34996061
)(340,26.35000713
)(341,26.35005406
)(342,26.35006008
)(343,26.35010292
)(344,26.35006894
)(345,26.35010774
)(346,26.35003427
)(347,26.35006906
)(348,26.34995659
)(349,26.34998741
)(350,26.34983645
)(351,26.34986332
)(352,26.34967435
)(353,26.34969732
)(354,26.34960663
)(355,26.3499086
)(356,26.34997622
)(357,26.35024086
)(358,26.35030482
)(359,26.35053254
)(360,26.35059286
)(361,26.35078409
)(362,26.3508408
)(363,26.35099596
)(364,26.35104909
)(365,26.35116859
)(366,26.35121817
)(367,26.35130242
)(368,26.35134848
)(369,26.35139789
)(370,26.35144044
)(371,26.35145541
)(372,26.3514945
)(373,26.35147543
)(374,26.35151107
)(375,26.35145837
)(376,26.35149059
)(377,26.35140464
)(378,26.35143348
)(379,26.35131466
)(380,26.35134014
)(381,26.35118884
)(382,26.35121099
)(383,26.3510276
)(384,26.3511405
)(385,26.3512025
)(386,26.35145616
)(387,26.35151502
)(388,26.35173706
)(389,26.35179281
)(390,26.35198357
)(391,26.35203623
)(392,26.35219604
)(393,26.35224563
)(394,26.35237482
)(395,26.35242137
)(396,26.35252026
)(397,26.35256378
)(398,26.3526327
)(399,26.35267321
)(400,26.35271248)
};
\addplot[dashed] coordinates{(2,26.3864) (400,26.3864)};
\legend{$C_n^{fl}$,$C_{BS}^{fl}$}
\end{axis}
\end{tikzpicture}
\begin{tikzpicture}
\begin{axis}[xmin=50,xmax=150,ymin=26.29,ymax=26.34,x=0.49mm]
\addplot[black,mark=none] coordinates{
(2,26.03214307
)(3,26.13641891
)(4,26.14768195
)(5,26.12857513
)(6,26.17119325
)(7,26.18015818
)(8,26.20223747
)(9,26.20591513
)(10,26.17827979
)(11,26.21961199
)(12,26.22453074
)(13,26.23289796
)(14,26.23556936
)(15,26.21968503
)(16,26.2365696
)(17,26.24022441
)(18,26.25244771
)(19,26.25494469
)(20,26.25413294
)(21,26.25558161
)(22,26.24426151
)(23,26.25836656
)(24,26.26077572
)(25,26.26873609
)(26,26.27051603
)(27,26.27157213
)(28,26.27276916
)(29,26.2679088
)(30,26.26857359
)(31,26.26971773
)(32,26.27783859
)(33,26.2793569
)(34,26.28322149
)(35,26.28437927
)(36,26.28444491
)(37,26.28526465
)(38,26.28195367
)(39,26.28245875
)(40,26.281928
)(41,26.28804165
)(42,26.28919755
)(43,26.29268602
)(44,26.29361429
)(45,26.29470039
)(46,26.29541227
)(47,26.29430589
)(48,26.29481304
)(49,26.2917101
)(50,26.29339471
)(51,26.29438361
)(52,26.29839353
)(53,26.29922873
)(54,26.30158541
)(55,26.30227279
)(56,26.30309117
)(57,26.30363688
)(58,26.30302634
)(59,26.30343667
)(60,26.30150069
)(61,26.30178195
)(62,26.30119121
)(63,26.3046507
)(64,26.30533664
)(65,26.30766209
)(66,26.30824586
)(67,26.30950214
)(68,26.30998722
)(69,26.31023672
)(70,26.31062667
)(71,26.30992912
)(72,26.31022755
)(73,26.30863994
)(74,26.30885045
)(75,26.30871013
)(76,26.31133929
)(77,26.31187657
)(78,26.31372056
)(79,26.31418599
)(80,26.31528272
)(81,26.31567842
)(82,26.31606456
)(83,26.31639265
)(84,26.31610359
)(85,26.31636624
)(86,26.31543601
)(87,26.31563536
)(88,26.31409664
)(89,26.31555344
)(90,26.31601431
)(91,26.31783984
)(92,26.31824799
)(93,26.31952202
)(94,26.31987875
)(95,26.32062418
)(96,26.32093081
)(97,26.32116988
)(98,26.32142776
)(99,26.32118203
)(100,26.32139249
)(101,26.32068282
)(102,26.32084719
)(103,26.31969372
)(104,26.32026871
)(105,26.32064854
)(106,26.32220401
)(107,26.3225447
)(108,26.32369187
)(109,26.32399425
)(110,26.32474774
)(111,26.32501268
)(112,26.32538674
)(113,26.3256151
)(114,26.32562362
)(115,26.32581626
)(116,26.32547273
)(117,26.32563052
)(118,26.32494807
)(119,26.32507186
)(120,26.3240632
)(121,26.32511485
)(122,26.32541966
)(123,26.32662864
)(124,26.32690403
)(125,26.32780846
)(126,26.32805499
)(127,26.3286644
)(128,26.32888264
)(129,26.32920637
)(130,26.32939689
)(131,26.32944405
)(132,26.32960743
)(133,26.32938691
)(134,26.32952373
)(135,26.32904421
)(136,26.32915504
)(137,26.32842496
)(138,26.32851037
)(139,26.32864912
)(140,26.32975181
)(141,26.32999136
)(142,26.33085822
)(143,26.33107538
)(144,26.33171288
)(145,26.33190804
)(146,26.33232255
)(147,26.33249609
)(148,26.33269383
)(149,26.33284615
)(150,26.33283322
)(151,26.33296472
)(152,26.33274709
)(153,26.33285815
)(154,26.33244166
)(155,26.33253267
)(156,26.33192302
)(157,26.33199439
)(158,26.33208499
)(159,26.33300962
)(160,26.33321174
)(161,26.33395295
)(162,26.33413756
)(163,26.33469984
)(164,26.33486721
)(165,26.33525495
)(166,26.33540535
)(167,26.33562286
)(168,26.33575657
)(169,26.3358081
)(170,26.33592539
)(171,26.3358151
)(172,26.33591623
)(173,26.33564819
)(174,26.33573346
)(175,26.33531167
)(176,26.33538133
)(177,26.33480971
)(178,26.33534541
)(179,26.33552608
)(180,26.33625256
)(181,26.33641923
)(182,26.3370022
)(183,26.33715505
)(184,26.33759764
)(185,26.33773688
)(186,26.33804217
)(187,26.33816798
)(188,26.33833902
)(189,26.3384516
)(190,26.33849137
)(191,26.33859091
)(192,26.33850235
)(193,26.33858905
)(194,26.33837503
)(195,26.33844909
)(196,26.33811246
)(197,26.33817406
)(198,26.33771759
)(199,26.33784999
)(200,26.33800711
)(201,26.33865699
)(202,26.33880284
)(203,26.33933736
)(204,26.33947209
)(205,26.33989351
)(206,26.34001724
)(207,26.3403278
)(208,26.34044069
)(209,26.34064258
)(210,26.34074476
)(211,26.34084015
)(212,26.34093176
)(213,26.34092278
)(214,26.34100396
)(215,26.34089271
)(216,26.34096361
)(217,26.34075216
)(218,26.34081292
)(219,26.34050329
)(220,26.34055405
)(221,26.34014826
)(222,26.34042592
)(223,26.34055976
)(224,26.34110506
)(225,26.34122977
)(226,26.341682
)(227,26.34179769
)(228,26.34215849
)(229,26.34226525
)(230,26.34253626
)(231,26.34263421
)(232,26.34281702
)(233,26.34290626
)(234,26.34300247
)(235,26.3430831
)(236,26.34309427
)(237,26.3431664
)(238,26.34309408
)(239,26.34315781
)(240,26.34300351
)(241,26.34305894
)(242,26.34282416
)(243,26.34287141
)(244,26.34255763
)(245,26.34259678
)(246,26.3424561
)(247,26.34296523
)(248,26.34307727
)(249,26.34350971
)(250,26.34361431
)(251,26.34397129
)(252,26.34406853
)(253,26.34435127
)(254,26.34444123
)(255,26.34465093
)(256,26.34473368
)(257,26.34487154
)(258,26.34494716
)(259,26.34501434
)(260,26.34508291
)(261,26.34508057
)(262,26.34514217
)(263,26.34507147
)(264,26.34512617
)(265,26.34498822
)(266,26.34503611
)(267,26.34483204
)(268,26.34487319
)(269,26.3446041
)(270,26.34463858
)(271,26.3444461
)(272,26.34489408
)(273,26.34499253
)(274,26.34537716
)(275,26.34546944
)(276,26.34579165
)(277,26.34587784
)(278,26.34613855
)(279,26.34621869
)(280,26.34641881
)(281,26.34649296
)(282,26.34663339
)(283,26.34670161
)(284,26.34678323
)(285,26.34684559
)(286,26.34686929
)(287,26.34692583
)(288,26.34689247
)(289,26.34694327
)(290,26.34685371
)(291,26.34689881
)(292,26.3467539
)(293,26.34679337
)(294,26.34659394
)(295,26.34662784
)(296,26.34637472
)(297,26.34640311
)(298,26.34646071
)(299,26.34683636
)(300,26.34692132
)(301,26.34724447
)(302,26.3473243
)(303,26.34759566
)(304,26.3476704
)(305,26.34789066
)(306,26.34796037
)(307,26.34813022
)(308,26.34819493
)(309,26.34831507
)(310,26.34837483
)(311,26.34844593
)(312,26.34850079
)(313,26.34852351
)(314,26.34857351
)(315,26.34854853
)(316,26.34859372
)(317,26.34852168
)(318,26.3485621
)(319,26.34844366
)(320,26.34847936
)(321,26.34831515
)(322,26.34834617
)(323,26.34813685
)(324,26.34816323
)(325,26.34801105
)(326,26.34835585
)(327,26.34843233
)(328,26.34873291
)(329,26.34880506
)(330,26.34906196
)(331,26.34912982
)(332,26.34934359
)(333,26.34940719
)(334,26.34957835
)(335,26.34963773
)(336,26.34976683
)(337,26.34982202
)(338,26.34990957
)(339,26.34996061
)(340,26.35000713
)(341,26.35005406
)(342,26.35006008
)(343,26.35010292
)(344,26.35006894
)(345,26.35010774
)(346,26.35003427
)(347,26.35006906
)(348,26.34995659
)(349,26.34998741
)(350,26.34983645
)(351,26.34986332
)(352,26.34967435
)(353,26.34969732
)(354,26.34960663
)(355,26.3499086
)(356,26.34997622
)(357,26.35024086
)(358,26.35030482
)(359,26.35053254
)(360,26.35059286
)(361,26.35078409
)(362,26.3508408
)(363,26.35099596
)(364,26.35104909
)(365,26.35116859
)(366,26.35121817
)(367,26.35130242
)(368,26.35134848
)(369,26.35139789
)(370,26.35144044
)(371,26.35145541
)(372,26.3514945
)(373,26.35147543
)(374,26.35151107
)(375,26.35145837
)(376,26.35149059
)(377,26.35140464
)(378,26.35143348
)(379,26.35131466
)(380,26.35134014
)(381,26.35118884
)(382,26.35121099
)(383,26.3510276
)(384,26.3511405
)(385,26.3512025
)(386,26.35145616
)(387,26.35151502
)(388,26.35173706
)(389,26.35179281
)(390,26.35198357
)(391,26.35203623
)(392,26.35219604
)(393,26.35224563
)(394,26.35237482
)(395,26.35242137
)(396,26.35252026
)(397,26.35256378
)(398,26.3526327
)(399,26.35267321
)(400,26.35271248)
};
\end{axis}
\end{tikzpicture}
{\caption{Price of a lookback call, and its fine structure}\label{fig-lookcall}}
\end{center}
\end{figure}
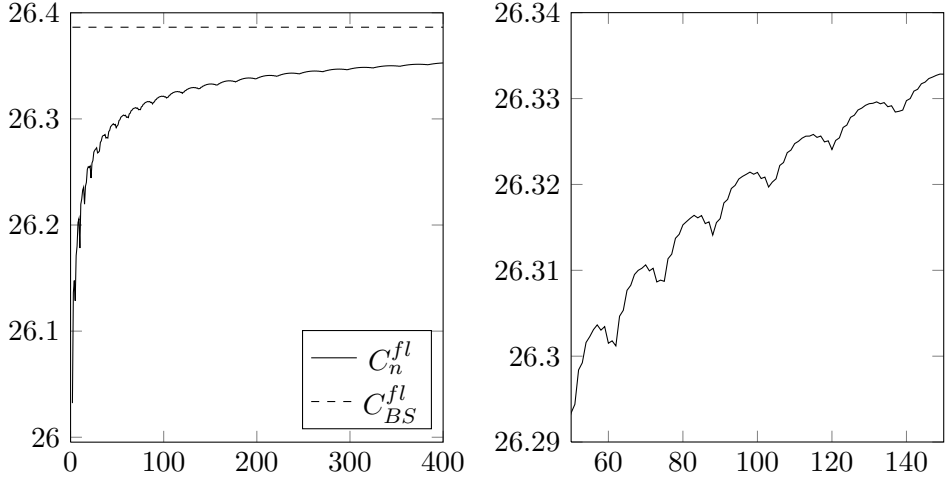

This reminds one of the (much more violent) oscillations for European
vanilla options. To deal with them, Diener and Diener
\cite{diener_diener} have used asymptotic expansions of the form
\begin{equation*}
C_n=c_0+\frac{c_1}{\sqrt{n}}+\frac{c_2}{n}
+O\Bigl(\frac{1}{n^{3/2}}\Bigr),
\end{equation*}
where the coefficients~$c_1$ et~$c_2$ are allowed to be bounded
functions of $n$. The variability of these coefficients capture the
observed oscillations.

In order to obtain such asymptotic expansions in our setting we will
need the following.

\begin{lemme}\label{l-frozen}
  Let $a_0$, $a_1$, $a_2\in \mathbb{R}$, and let $\eta=\eta(n)$ be a
  bounded function of~$n$. Then
\[
\Big(1+\frac{a_1}{\sqrt{n}}+\frac{a_2}{n}
+O\Bigl(\frac{1}{n^{3/2}}\Bigr)\Big)^{\eta} = 1 + \frac{a_1\eta}{\sqrt{n}} + \frac{a_2\eta-\frac{1}{2}\eta(1-\eta)a_1^2}{n} +O\Bigl(\frac{1}{n^{3/2}}\Bigr).
\]
\end{lemme}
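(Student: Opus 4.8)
The plan is to pass to logarithms, so that the awkward variable exponent becomes an ordinary product, and then to exponentiate. Write $x=x_n=\frac{a_1}{\sqrt{n}}+\frac{a_2}{n}+O(n^{-3/2})$ for the quantity inside the parentheses minus $1$, and note that $x\to 0$ as $n\to\infty$; in particular $1+x$ lies in the domain of $\log$ and the relevant Taylor expansions are valid for all large $n$. The identity to exploit is
\[
\big(1+x\big)^{\eta}=e^{\eta\log(1+x)}.
\]

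First I would expand the logarithm. Using $\log(1+x)=x-\tfrac12 x^2+O(x^3)$ together with $x^2=\frac{a_1^2}{n}+O(n^{-3/2})$ (the cross term being $O(n^{-3/2})$) and $x^3=O(n^{-3/2})$, I obtain
\[
\log(1+x)=\frac{a_1}{\sqrt{n}}+\frac{a_2-\frac12 a_1^2}{n}+O\Big(\frac{1}{n^{3/2}}\Big).
\]
Multiplying by $\eta$ and using that $\eta$ is bounded (so that it does not degrade the order of the remainder) yields
\[
w:=\eta\log(1+x)=\frac{a_1\eta}{\sqrt{n}}+\frac{(a_2-\frac12 a_1^2)\eta}{n}+O\Big(\frac{1}{n^{3/2}}\Big).
\]

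Next I would expand the exponential. Since $w\to 0$, the expansion $e^{w}=1+w+\tfrac12 w^2+O(w^3)$ applies, and with $w^2=\frac{a_1^2\eta^2}{n}+O(n^{-3/2})$ and $w^3=O(n^{-3/2})$ the coefficient of $1/n$ collects into
\[
\Big(a_2-\tfrac12 a_1^2\Big)\eta+\tfrac12 a_1^2\eta^2
=a_2\eta-\tfrac12\eta(1-\eta)a_1^2,
\]
which is exactly the coefficient asserted in the lemma; the $1/\sqrt{n}$ term is $a_1\eta$, as claimed. Assembling the three displays gives the stated expansion.

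The only point demanding care — and the main (though mild) obstacle — is that the remainder terms must be controlled \emph{uniformly} in $n$. This is precisely where the boundedness of $\eta$ enters: each factor of $\eta$, $\eta^2$, \dots multiplying an $O(n^{-3/2})$ quantity stays $O(n^{-3/2})$, and the absolute constants hidden in the two Taylor expansions are legitimate because $x$ and $w$ both tend to $0$. (Note that the parameter $a_0$ appearing in the hypothesis plays no role in the conclusion and may be ignored.)
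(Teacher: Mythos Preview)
Your proof is correct and follows essentially the same route as the paper: rewrite $(1+x)^{\eta}$ as $\exp(\eta\log(1+x))$, expand the logarithm to order $n^{-1}$, multiply by the bounded $\eta$, and then expand the exponential. Your remark that the boundedness of $\eta$ is what keeps the big-$O$ constants absolute is exactly the point the paper emphasizes at the end of its proof.
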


\begin{proof} We have that, for large $n$,
\begin{align*}
\Big(1\!+\!\frac{a_1}{\sqrt{n}}\!+\!\frac{a_2}{n}\!
+\!O\Bigl(\frac{1}{n^{3/2}}\Bigr)\Big)^{\eta}
	&\!= \exp\Big(\eta \log\Big(1+\frac{a_1}{\sqrt{n}}+\frac{a_2}{n}
		+O\Bigl(\frac{1}{n^{3/2}}\Bigr)\Big)\Big)\\
	&\!= \exp\Big(\eta\Big(\frac{a_1}{\sqrt{n}}+\frac{a_2-\frac{1}{2}a_1^2}{n}
		+O\Bigl(\frac{1}{n^{3/2}}\Bigr)\Big)\Big)\\
	&\!= \exp\Big(\frac{a_1\eta}{\sqrt{n}}+\frac{(a_2-\frac{1}{2}a_1^2)\eta}{n}
		+O\Bigl(\frac{1}{n^{3/2}}\Bigr)\Big)\\
	&\!= 1+\frac{a_1\eta}{\sqrt{n}}+\frac{ (a_2-\frac{1}{2}a_1^2)\eta }{n}+
	\frac{1}{2} \frac{a_1^2\eta^2}{n}
		+O\Bigl(\frac{1}{n^{3/2}}\Bigr),
\end{align*}
which confirms the assertion. It is important to note here that each
big-O condition contains a constant that is absolute.
\end{proof}

In essence, the lemma says that we may expand the function
$(1+\frac{a_1}{\sqrt{n}}+\frac{a_2}{n} +O(\frac{1}{n^{3/2}}))^{\eta}$
as if $\eta$ was a constant. Diener and Diener \cite{diener_diener}
therefore speak of a \textit{frozen parameter}.

We first consider the asymptotic expansion of the lookback call. Its
price in the Black-Scholes model is well known. If $r>0$ then Goldman,
Sosin and Gatto \cite{gatto_goldman_sosin} found that
\begin{equation}\label{formula-ggs}
C^{fl}_{BS}(t)=S_t-S_t \theta_1 B_1-M_t B_2+S_t\,(1-\theta_2)B_3,
\end{equation}
where 
\begin{itemize}
\setlength{\itemsep}{1mm}
\item[$\theta_1=$\!]$1+\frac{\sigma^2}{2r}$,
\item[$\theta_2=$\!]$1-\frac{\sigma^2}{2r}$,
\item[$B_1=$\!]$\Phi(-d_1)$,
\item[$B_2=$\!]$e^{-r\tau}\Phi(d_2)$,
\item[$B_3=$\!]$e^{-r\tau}\big(\frac{S_t}{M_t}\big)^{-2r/\sigma^2}\Phi(d_3).$
\end{itemize}

By passing to the limit in (\ref{formula-ggs}), Babbs~\cite{babbs}
obtained the price in the case $r=0$ as
\begin{equation*}
C^{fl}_{BS}(t)= S_t-S_t B_1-M_t B_2-S_t(B_3^*-B_4^*),
\end{equation*}
where
\begin{itemize}
\item[$B_3^*=$\!]$\big(\log\frac{S_t}{M_t}+
	\frac{\sigma^2\tau}{2}\big)\Phi(-d_1)$,
\item[$B_4^*=$\!]$\sigma\sqrt{\tau}\,\frac{e^{-d_1^2/2}}{\sqrt{2\pi}}$.
\end{itemize}

\begin{theorem}\label{thm-call}
  Let $0\leq t< T$ and $n\in\mathbb{N}$. The price at time $t$ of the
  European lookback call option with floating strike in the $n$-period
  CRR binomial model satisfies the following:

\emph{(i)} if $r>0$ then   
\[
\begin{split}
C_{n}^{fl}(t)&=C^{fl}_{BS}(t)
	-S_t\frac{\sigma\sqrt{\tau}}{2}
		\big(\theta_1 B_1+\theta_2 B_3\big)
		\frac{1}{\sqrt{n}}\\
&-\Big[S_t\frac{\sigma^2 \tau}{12}
		\Big((\theta_1+2)B_1+(\theta_2+2-T_1)B_3\Big)
		-M_t T_2 B_4
		\Big]\frac{1}{n}\\
&+O\Bigl(\frac{1}{n^{3/2}}\Bigr),
\end{split}
\]
where $B_4=\sigma\sqrt{\tau}\,\big(\frac{S_t}{M_t}\big)^{(1-2r/\sigma^2)/2}\,
	\frac{e^{-\frac{1}{4}(d_1^2+d_4^2)}}{\sqrt{2\pi}}$,
$T_1=\frac{12r}{\sigma^2}\theta_2\kappa_n -(1+\frac{4r^2}{\sigma^4})\log\frac{S_t}{M_t}$ and $
T_2 =\frac{1}{2}+\kappa_n +\frac{d_4}{6\sigma\sqrt{\tau}}\log\frac{S_t}{M_t}$;
 
\emph{(ii)} if $r=0$, then 
\[
\begin{split}
C_{n}^{fl}(t)&=C^{fl}_{BS}(t)
	-S_t\frac{\sigma\sqrt{\tau}}{2}
		\big(2B_1+B_3^*-B_4^*\big)
		\frac{1}{\sqrt{n}}\\
&-\Big[S_t\frac{\sigma^2 \tau}{6}
		\Big(\big( 3+3\kappa_n-\frac{\sigma^2 \tau}{4}\big)B_1+B_3^*\Big)
		-S_t T_2^* B_4^*
		\Big]\frac{1}{n}\\
&+O\Bigl(\frac{1}{n^{3/2}}\Bigr),
\end{split}
\]
where $T_2^*=\frac{1}{2}+\kappa_n+\frac{\sigma^2 \tau}{12}
	-\frac{d_2}{6\sigma\sqrt{\tau}}\log\frac{S_t}{M_t}$.
\end{theorem}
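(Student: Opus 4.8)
The plan is to start from the exact formula \eqref{somme-call} of Theorem~\ref{t-call}, $C_n^{fl}(t)=S_t(V_1-V_2+V_3)$, and to reduce each of the three sums to a finite combination of complementary binomial cumulative distribution functions, so that Corollary~\ref{cor-approx-binom} (together with Remark~\ref{rmq-approx-binom}) can be applied term by term. In $V_1$ and $V_2$ the only non-binomial ingredient is the weight $u^{n-j_0-2k}$. First I would split, e.g., $V_1=\sum_{k\geq k_{\min}}\binom{n}{k}q^k(1-q)^{n-k}-u^{n-j_0}\sum_{k\geq k_{\min}}\binom{n}{k}(qu^{-2})^k(1-q)^{n-k}$, and then absorb the factor $u^{-2k}$ into the probability by the change of measure $\sum_k\binom{n}{k}r^ks^{n-k}=(r+s)^n\sum_k\binom{n}{k}(\tfrac{r}{r+s})^k(\tfrac{s}{r+s})^{n-k}$: with $\hat q=qu^{-2}/(qu^{-2}+1-q)$ the second sum is $(qu^{-2}+1-q)^n$ times a tail for the parameter $\hat q$. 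Thus $V_1$ becomes a combination of two tails, for $q$ and for $\hat q$, each with an explicit prefactor; $V_2$ is handled identically once the shift in $\binom{n}{k+\joinf+1}$ is absorbed.

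The term $V_3$ is the genuinely new difficulty, since its inner telescoping coefficient $\binom{n}{k-j}-\binom{n}{k-j-1}$ does not collapse to a single tail. Here I would follow the method of the second author in \cite{heuw1}: the substitution $m=k-j$ factors out $(q/(1-q))^j$ and turns the inner sum into binomial cumulative distribution functions (the $\binom{n}{k-j-1}$ piece reindexing to produce a further $q/(1-q)$ factor), after which the outer sum over $j$ of $(1-u^{-j})(q/(1-q))^j$ is resolved either by interchanging the order of summation and applying Abel summation to the telescoping difference, or by summing the geometric weights directly. The outcome is again a finite family of tail sums with explicit prefactors, to which Corollary~\ref{cor-approx-binom} applies.

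Next I would compute the asymptotic expansions, in powers of $n^{-1/2}$, of every quantity fed into Corollary~\ref{cor-approx-binom}. Writing $h=\sigma\sqrt{\tau/n}$ one has $u=e^h$, $e^{-r\tau/n}=e^{-(r/\sigma^2)h^2}$, and from \eqref{qu} the expansion $q=\tfrac12+\tfrac12(\tfrac12+\tfrac{r}{\sigma^2})\sigma\sqrt{\tau}\,n^{-1/2}+\cdots$, which supplies the coefficients $\alpha,\beta,\gamma,\delta,\varepsilon$; similarly for $\hat q$. Since $u^{-j_0}=M_t/S_t$ by \eqref{j0}, the prefactors $u^{n-j_0}$ and $(qu^{-2}+1-q)^n$ expand cleanly, and I would control their $n$-dependence with Lemma~\ref{l-frozen}, treating the bounded fractional contribution of $j_0$ as a frozen parameter. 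The limits $k_{\min}=n-\lfloor(n+j_0)/2\rfloor=\tfrac{n}{2}-\tfrac{j_0}{2}+\{(n+j_0)/2\}$ and $k_{\max}$ furnish the coefficients $a,b_n,c,d,e$, with $b_n$ carrying the fractional part $\{j_0\}$. Applying Corollary~\ref{cor-approx-binom} then yields, for each tail, a normal term $\Phi(A)$ whose argument collapses in the limit to one of $\pm d_1,\dots,\pm d_4$, plus corrections governed by $C_0,C_2,D_0,\dots,E_4$.

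Finally I would collect all contributions. At leading order the $\Phi$-terms and prefactors must reassemble into the Goldman--Sosin--Gatto formula \eqref{formula-ggs}, confirming $C^{fl}_n(t)\to C^{fl}_{BS}(t)$; the $n^{-1/2}$ and $n^{-1}$ coefficients then organise into the stated combinations of $B_1,B_3,B_4$ (the mixed exponent $e^{-(d_1^2+d_4^2)/4}$ in $B_4$ arising from the $e^{-A^2/2}$ factors of the corrections), while the $B_n^2$-type terms of Corollary~\ref{cor-approx-binom} deposit the oscillating quantity $\kappa_n=\{j_0\}(1-\{j_0\})$ into $T_1,T_2$. The case $r=0$ must be treated separately, as $\theta_1,\theta_2$ carry a factor $1/r$; here one either repeats the expansion directly with $q=\tfrac12+\tfrac{\sigma\sqrt\tau}{4}n^{-1/2}+\cdots$ or passes to the limit $r\to0$ as Babbs did, recovering $B_3^*,B_4^*$ and the coefficient $T_2^*$. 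I expect the main obstacle to be neither the reductions nor any single expansion, but the sheer volume of consistent bookkeeping: carrying five orders of each expansion through Corollary~\ref{cor-approx-binom} and verifying that the numerous $\{j_0\}$-dependent terms cancel except for the single surviving $\kappa_n$.
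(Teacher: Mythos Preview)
Your plan matches the paper's proof almost step for step: reduce $V_1,V_2,V_3$ to finitely many binomial tails, expand each via Corollary~\ref{cor-approx-binom}/Remark~\ref{rmq-approx-binom}, handle the $\{j_0\}$-powers with Lemma~\ref{l-frozen}, and collect. Two small points are worth tightening. First, your change-of-measure parameter $\hat q=qu^{-2}/(qu^{-2}+1-q)$ is nothing new: using $q=pue^{-r\tau/n}$ and $1-q=(1-p)de^{-r\tau/n}$ one gets $\hat q=p$ and $u^{n-j_0}(qu^{-2}+1-q)^n=(M_t/S_t)e^{-r\tau}$, so the paper writes $V_1$ directly as a difference of tails for $q$ and for $p$. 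Second, the $r=0$ singularity does not first appear in $\theta_1,\theta_2$ at the expansion stage but already in the $V_3$ reduction: one of the geometric ratios is $Qd=qd/(1-q)$, which equals $1$ exactly when $r=0$, so the closed form for $V_3$ you would obtain for $r>0$ contains a $1/(Qd-1)$ factor that blows up; the paper therefore redoes the $V_3$ computation for $r=0$ using $k\binom{n}{k}=n\binom{n-1}{k-1}$ before applying Corollary~\ref{cor-approx-binom}. Your alternative of ``passing to the limit $r\to0$ as Babbs did'' gives the right coefficients (as the paper remarks a posteriori) but is not by itself a proof, since the intermediate $O(n^{-3/2})$ bounds are not shown to be uniform in $r$ near $0$.
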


\begin{proof} 
  The first step of the proof consists in writing~(\ref{somme-call})
  as a combination of (complementary) binomial cumulative distribution
  functions.

  As for $V_1$ we note that, by \eqref{pe} and \eqref{qu},
\[
u^{n-j_0-2k} q^k\,(1-q)^{n-k} = u^{-j_0} e^{-r\tau} p^k\,(1-p)^{n-k},
\]
so that
\[
V_1= \binc{n,q}(j_1-1)-\frac{M_t}{S_t}e^{-r\tau}\binc{n,p}(j_1-1),
\]
where $j_1=n-\lfloor \frac{n+j_0}{2}\rfloor$.

For $V_2$ we first make a change of index $k\to k+\joinf+1$ and then
proceed as for $V_1$ to obtain that
\[
V_2= Q^{-\joinf-1}\binc{n,q}(j_2-1)-\frac{M_t}{S_t}e^{-r\tau}P^{-\joinf-1}\binc{n,p}(j_2-1),
\]
where
\[
Q=\frac{q}{1-q},\quad P=\frac{p}{1-p}
\]
and $j_2=j_1+\joinf+1$.

For $V_3$ we proceed as in \cite{heuw1}. We first split the inner sum
and then change indices, $k\to k-j$ and $k\to k-j-1$. Next we
interchange the two double sums that have appeared, noting that $0\leq
j\leq n-\joinf-1$ and $0\leq k \leq\lfloor
\frac{n-j-\joinf-1}{2}\rfloor$ is equivalent to $0\leq k\leq \lfloor
\frac{n-\joinf-1}{2}\rfloor$ and $0\leq j\leq n-\joinf-1-2k$; and that
$0\leq j\leq n-\joinf-1$ and $0\leq k \leq\lfloor
\frac{n-j-\joinf-1}{2}\rfloor-1$ is equivalent to $0\leq k\leq \lfloor
\frac{n-\joinf-1}{2}\rfloor-1$ and $0\leq j\leq n-\joinf-3-2k$. Note
that $\lfloor \frac{n-\joinf-1}{2}\rfloor=j_1-1$.

Now, if $r>0$, then the geometric series that arise as inner sums have
ratio different from 1, so that by continuing as in \cite{heuw1} we
obtain
\begin{align*}
V_3 =& \frac{Q(1-d)}{(Q-1)(Qd-1)} \,\big(\bin{n,q}(j_3) - Q\bin{n,q}(j_3-1)\big)\\
&+ \frac{Q^{-\joinf-1}}{Q-1}\,\big(Q\bin{n,1-q}(j_3) -\bin{n,1-q}(j_3-1)\big)\\
&+ e^{-r\tau}\frac{(Qd)^{-\joinf-1}}{d(1-Qd)}\,\big(P\bin{n,1-p}(j_3) -\bin{n,1-p}(j_3-1)\big)
\end{align*}
with $j_3=j_1-1$, where we have used that $u^{-1}=d$.

However, if $r=0$, then $Q=u$ so that geometric series with ratio 1
appear. Using $k\binom{n}{k} = n \binom{n-1}{k-1}$, the calculations
then lead us to
\begin{align*}
V_3 =& \Big(\joinf-n-\frac{1}{u-1}\Big) \,\big(\bin{n,q}(j_3) - u\bin{n,q}(j_3-1)\big) - 2u\bin{n,q}(j_3-1)\\
&+ \frac{u^{-\joinf-1}}{u-1}\,\big(u\bin{n,p}(j_3) -\bin{n,p}(j_3-1)\big)\\
&+ 2nq\big(\bin{n-1,q}(j_3-1) -u\bin{n-1,q}(j_3-2)\big),
\end{align*}
where we have used that $1-q=p$ in this situation. We note that the
formulas for $V_1$ and $V_2$ are not affected, but one may replace $Q$
by $u$ and $P$ by $d$ in $V_2$.

The second step of the proof is to expand each term using
Corollary~\ref{cor-approx-binom} and Remark \ref{rmq-approx-binom}. We
note that we use Lemma \ref{l-frozen} with $\eta=\{j_0\}$ to expand a
power containing $\joinf$. For example, we write
\[
Q^{-\joinf-1} = Q^{\{j_0\}}Q^{-j_0-1}
\]
and expand separately. We then obtain the result after a long series
of calculations and simplifications.
\end{proof}

We turn to the asymptotics of the put price. We have to replace the
running minimum by the running maximum
\[
M_t:=\max_{t^*\leq t}S_{t^*},
\]
and the initial level becomes
\[
j_0=\frac{\log (M_t/S_t)}{\sigma\sqrt{\tau/n}} \; (>0).
\]
With this re-interpretation, the numbers $d_1$, $d_2$, $d_3$, $d_4$
and $\kappa_n$ keep their meaning.

For $r>0$, Goldman, Sosin and Gatto \cite{gatto_goldman_sosin} found that 
\begin{equation*}\label{formula-put-ggs}
P^{fl}_{BS}(t)=-S_t+S_t \theta_1 B_1+M_t B_2-S_t\,(1-\theta_2)B_3
\end{equation*}
with
\begin{itemize}
\setlength{\itemsep}{1mm}
\item[$B_1=$\!]$\Phi(d_1)$,
\item[$B_2=$\!]$e^{-r\tau}\Phi(-d_2)$,
\item[$B_3=$\!]$e^{-r\tau}\big(\frac{S_t}{M_t}\big)^{-2r/\sigma^2}\Phi(-d_3)$,
\end{itemize}
and, for $r=0$, Babbs \cite{babbs} obtained that
\begin{equation*}
P^{fl}_{BS}(t)= -S_t+S_t B_1+M_t B_2+S_t (B_3^*+B_4^*)
\end{equation*}
with
\begin{itemize}
\setlength{\itemsep}{1mm}
\item[$B_3^*=$\!]$\big(\log\frac{S_t}{M_t}+
	\frac{\sigma^2\tau}{2}\big)\Phi(d_1)$,
\item[$B_4^*=$\!]$\sigma\sqrt{\tau}\,\frac{e^{-d_1^2/2}}{\sqrt{2\pi}}$.
\end{itemize}

\begin{theorem}\label{thm-put}
  Let $0\leq t< T$ and $n\in\mathbb{N}$. The price at time $t$ of the
  European lookback put option with floating strike in the $n$-period
  CRR binomial model satisfies the following:

\emph{(i)} if $r>0$ then   
\[
\begin{split}
P_{n}^{fl}(t)&=P^{fl}_{BS}(t)
	-S_t\frac{\sigma\sqrt{\tau}}{2}
		\big(\theta_1 B_1+\theta_2 B_3\big)
		\frac{1}{\sqrt{n}}\\
&+\Big[S_t\frac{\sigma^2 \tau}{12}
		\Big((\theta_1+2)B_1+(\theta_2+2-T_1)B_3\Big)
		+M_t T_2 B_4
		\Big]\frac{1}{n}\\
&+O\Bigl(\frac{1}{n^{3/2}}\Bigr),
\end{split}
\]
where $B_4=\sigma\sqrt{\tau}\,\big(\frac{S_t}{M_t}\big)^{(1-2r/\sigma^2)/2}\,
	\frac{e^{-\frac{1}{4}(d_1^2+d_4^2})}{\sqrt{2\pi}}$,
$T_1=\frac{12r}{\sigma^2}\theta_2\kappa_n
	-(1+\frac{4r^2}{\sigma^4})\log\frac{S_t}{M_t}$
and
$T_2=\frac{1}{2}+\kappa_n
	+\frac{d_4}{6\sigma\sqrt{\tau}}\log\frac{S_t}{M_t}$;
	
\emph{(ii)} if $r=0$ then 
\[
\begin{split}
P_{n}^{fl}(t)&=P^{fl}_{BS}(t)
	-S_t\frac{\sigma\sqrt{\tau}}{2}
		\big(2B_1+B_3^*+B_4^*\big)
		\frac{1}{\sqrt{n}}\\
&+\Big[S_t\frac{\sigma^2 \tau}{6}
		\Big(\big( 3+3\kappa_n-\frac{\sigma^2 \tau}{4}\big)B_1+B_3^*\Big)
		+S_t T_2^* B_4^*
		\Big]\frac{1}{n}\\
&+O\Bigl(\frac{1}{n^{3/2}}\Bigr),
\end{split}
\]
where $T_2^*=\frac{1}{2}+\kappa_n+\frac{\sigma^2 \tau}{12}
	-\frac{d_2}{6\sigma\sqrt{\tau}}\log\frac{S_t}{M_t}$.
\end{theorem}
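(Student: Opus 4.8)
The plan is to follow verbatim the scheme used for the call in the proof of Theorem~\ref{thm-call}, since the closed formula for the put in Theorem~\ref{t-put} is the exact structural analogue of the one for the call: the Bernoulli weights $q^k(1-q)^{n-k}$ are replaced by $(1-q)^k q^{n-k}$, the payoff factors $(1-u^{-j})$ and $(1-u^{n-j_0-2k})$ by $(u^j-1)$ and $(u^{j_0+2k-n}-1)$, and the splitting $V_1-V_2+V_3$ is retained. I would therefore carry out the same two macro-steps: first, rewrite $V_1$, $V_2$, $V_3$ as linear combinations of (complementary) binomial cumulative distribution functions; and second, expand each such function by Corollary~\ref{cor-approx-binom}, Remark~\ref{rmq-approx-binom} and Lemma~\ref{l-frozen}, then collect terms.

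For the first step the key algebraic identity is the put counterpart of the one used for the call, namely
\[
u^{j_0+2k-n}(1-q)^k q^{n-k} = u^{j_0}e^{-r\tau}(1-p)^k p^{n-k},
\]
which follows from \eqref{pe} and \eqref{qu} exactly as in the call case, and where $u^{j_0}=M_t/S_t$. Using it, $V_1$ and $V_2$ become combinations of $\binc{n,1-q}$ and $\binc{n,1-p}$ (for $V_2$ after the index shift $k\to k+\joinf+1$, producing the frozen powers $P^{-\joinf-1}$, $Q^{-\joinf-1}$), while $V_3$ is treated by splitting the inner sum, shifting indices $k\to k-j$ and $k\to k-j-1$, and interchanging the two resulting double sums so that the geometric inner sums can be evaluated. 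As for the call, I would separate the cases $r>0$ (geometric ratio $\neq 1$) and $r=0$ (ratio $=1$, where $k\binom{n}{k}=n\binom{n-1}{k-1}$ forces the appearance of $\bin{n-1,\cdot}$ terms). The thresholds $k_{\min}$, $j_2=k_{\min}+\joinf+1$ and $j_3=k_{\min}-1$ are the same as before.

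For the second step I would insert the expansions of $1-p$ and $1-q$ in powers of $n^{-1/2}$ (obtained from $u=e^{\sigma\sqrt{\tau/n}}$), together with the expansions of the thresholds in the normalized form $\tfrac{n}{2}+a\sqrt{n}+\tfrac{1}{2}+b_n+\cdots$ required by Corollary~\ref{cor-approx-binom}; here $j_0=\frac{\log(M_t/S_t)}{\sigma\sqrt{\tau}}\sqrt{n}$ grows like $\sqrt{n}$, and the floor functions contribute the bounded oscillating term whose presence ultimately produces $\kappa_n=\{j_0\}(1-\{j_0\})$. The frozen powers $Q^{-\joinf-1}$, $P^{-\joinf-1}$, $(Qd)^{-\joinf-1}$ are expanded by writing $-\joinf-1=\{j_0\}-j_0-1$ and applying Lemma~\ref{l-frozen} with $\eta=\{j_0\}$. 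One then checks that the parameters $A=2(\alpha-a)$ specialize to $d_1$, $-d_2$, $-d_3$, $-d_4$, so that the $\Phi(A)$ contributions assemble into the Goldman--Sosin--Gatto price $P^{fl}_{BS}(t)$ for $r>0$ (and into the Babbs price for $r=0$), while the $e^{-A^2/2}$ terms combine into the stated $n^{-1/2}$ and $n^{-1}$ corrections.

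The main obstacle is purely computational: tracking every contribution consistently to order $n^{-1}$ through a long chain of substitutions and cancellations, and verifying that the $\Phi$-arguments and the $e^{-\frac14(d_1^2+d_4^2)}$ prefactors recombine into the compact coefficients $B_4$, $T_1$, $T_2$ (resp.\ $T_2^*$). A useful independent check is the put--call symmetry visible in the statements: the $n^{-1/2}$ coefficient is identical in form to the call's, whereas the $n^{-1}$ coefficient is its negative. Confirming that the computation reproduces exactly this sign pattern, after the reinterpretation $M_t=\max_{t^*\leq t}S_{t^*}$, $B_1=\Phi(d_1)$, and so on, guards against bookkeeping errors.
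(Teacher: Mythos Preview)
Your proposal is correct and matches the paper's approach: the paper itself merely states that the proof is ``quite similar to the proof of Theorem~\ref{thm-call} and is therefore omitted,'' and your outline reproduces exactly that parallel argument, with the correct put-analogue identity $u^{j_0+2k-n}(1-q)^kq^{n-k}=u^{j_0}e^{-r\tau}(1-p)^kp^{n-k}$ and the same two-step strategy (rewriting $V_1,V_2,V_3$ via binomial cdfs, then expanding by Corollary~\ref{cor-approx-binom} and Lemma~\ref{l-frozen}). Your observation about the sign pattern relative to the call is a sensible sanity check that the paper does not make explicit.
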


The proof is quite similar to the proof of Theorem~\ref{thm-call} and
is therefore omitted.

\begin{rmq}
  In case the option is valued at emission we have that~$S_t=M_t$ and
  thus~$\kappa_n=0$.  The formulas obtained here then reduce to the
  formulas previously found by the second author~\cite{heuw1}.
\end{rmq}

\begin{rmq}\label{rmq-terme2}
  In Theorems~\ref{thm-call} and~\ref{thm-put}, we have that~the
  coefficients of $\frac{1}{n}$ are bounded functions of~$n$. In fact,
  these coefficients are affine functions
  of~$\kappa_n=\{j_0\}(1-\{j_0\})$, which is bounded in $n$. The
  function $x\mapsto x(1-x)$ is therefore responsible for the
  parabola-like oscillations in Figure \ref{fig-lookcall}.
\end{rmq}

\begin{rmq}
The coefficients for~$r=0$ are the limits of those for~$r> 0$.
\end{rmq}


\section{Numerical examples}
In this part we give a numerical illustration of
Theorems~\ref{thm-call} and Theorem~\ref{thm-put}. These results tell
us that
\[
C_{n}^{fl}=C^{fl}_{BS} +\frac{C_1}{\sqrt{n}}+ \frac{C_2}{n}+ O\Bigl(\frac{1}{n^{3/2}}\Bigr)
\]
and
\[
P_{n}^{fl}=P^{fl}_{BS} +\frac{P_1}{\sqrt{n}}+ \frac{P_2}{n}+ O\Bigl(\frac{1}{n^{3/2}}\Bigr)
\]
with certain constants $C_1$, $P_1$ and functions $C_2$, $P_2$ that
are bounded in $n$. For example, for the call we should find that
$(C^{fl}_n-C^{fl}_{BS})\sqrt{n}$ and
$(C^{fl}_n-C^{fl}_{BS}-C_1/\sqrt{n})n$ almost coincide respectively
with $C_1$ and~$C_2$ for large~$n$.

This will be considered in the four tables below. We choose as values
$S_0=80$, $\sigma=0.2$ and $\tau=1.27$. For each type of option we
produce an example with a positive value for the spot rate~$(r=0.08)$
and an example with $r=0$.

For the call, we take~$M_t=60$ as the minimal price of the underlying
(see Tables~\ref{tab-call-rpos} and~\ref{tab-call-rnul}). The results
obtained are consistent with Theorem~\ref{thm-call}.

\begin{table}[h!]
\caption{Example for the call ($r>0$)}\label{tab-call-rpos}
\begin{tabular}{lrrrrr}
\hline
Number of periods~$n$ & 1,000 & 5,000 & 10,000& 50,000&100,000\tabularnewline
\hline
$C^{fl}_n $           &26.3647&26.3765&26.3794&26.3832&26.3842\tabularnewline
$C^{fl}_{BS}$         &26.3864&26.3864&26.3864&26.3864&26.3864\tabularnewline
$(C^{fl}_n-C^{fl}_{BS})\sqrt{n}$                    
                      &-0.6866&-0.6987&-0.7004&-0.7040&-0.7050\tabularnewline
$C_1$                 &-0.7071&-0.7071&-0.7071&-0.7071&-0.7071\tabularnewline
$(C^{fl}_n-C^{fl}_{BS}-C_1/\sqrt{n})n$
                      &0.6491&0.5931&0.6658&0.6868&0.6746\tabularnewline
$C_2$                 &0.6640&0.5961&0.6635&0.6808&0.6681\tabularnewline
\hline
\end{tabular}
\end{table}
\begin{table}[ht!]
\caption{Example for the call ($r=0$)}\label{tab-call-rnul}
\begin{tabular}{lrrrrr}
\hline
Number of periods~$n$ & 1,000 & 5,000 & 10,000& 50,000&100,000\tabularnewline
\hline
$C^{fl}_n $           &21.3779&21.4016&21.4074&21.4151&21.4169\tabularnewline
$C^{fl}_{BS}$         &21.4214&21.4214&21.4214&21.4214&21.4214\tabularnewline
$(C^{fl}_n-C^{fl}_{BS})\sqrt{n}$                    
                      &-1.3755&-1.3956&-1.3985&-1.4044&-1.4060\tabularnewline
$C_1$                 &-1.4095&-1.4095&-1.4095&-1.4095&-1.4095\tabularnewline
$(C^{fl}_n-C^{fl}_{BS}-C_1/\sqrt{n})n$
                      &1.0746&0.9868&1.1024&1.1371&1.1173\tabularnewline
$C_2$                 &1.1144&1.0069&1.1136&1.1410&1.1209\tabularnewline
\hline
\end{tabular}
\end{table}

We take the same parameters for the put but here we consider the
maximal price of the underlying at time $t$ as $M_t=100$ (see
Tables~\ref{tab-put-rpos} and~\ref{tab-put-rnul}). The results are
consistent with Theorem~\ref{thm-put}.

\begin{table}[h!]
\caption{Example for the put ($r>0$)}\label{tab-put-rpos}
\begin{tabular}{lrrrrr}
\hline
Number of periods~$n$ & 1,000 & 5,000 & 10,000& 50,000&100,000\tabularnewline
\hline
$P^{fl}_n $           &16.3662&16.4536&16.4747&16.5031&16.5098\tabularnewline
$P^{fl}_{BS}$         &16.5260&16.5260&16.5260&16.5260&16.5260\tabularnewline
$(P^{fl}_n-P^{fl}_{BS})\sqrt{n}$                    
                      &-5.0523&-5.1200&-5.1274&-5.1325&-5.1394\tabularnewline
$P_1$                 &-5.1466&-5.1466&-5.1466&-5.1466&-5.1466\tabularnewline
$(P^{fl}_n-P^{fl}_{BS}-P_1/\sqrt{n})n$
                      &2.9814&1.8813&1.9153&3.1439&2.2781\tabularnewline
$P_2$                 &3.0671&1.9652&1.9524&3.1866&2.3146\tabularnewline
\hline
\end{tabular}
\end{table}

\begin{table}[h!]
\caption{Example for the put ($r=0$)}\label{tab-put-rnul}
\begin{tabular}{lrrrrr}
\hline
Number of periods~$n$ & 1,000 & 5,000 & 10,000& 50,000&100,000\tabularnewline
\hline
$P^{fl}_n $           &23.4800&23.5410&23.5559&23.5759&23.5806\tabularnewline
$P^{fl}_{BS}$         &23.5921&23.5921&23.5921&23.5921&23.5921\tabularnewline
$(P^{fl}_n-P^{fl}_{BS})\sqrt{n}$                    
                      &-3.5462&-3.6140&-3.6217&-3.6271&-3.6340\tabularnewline
$P_1$                 &-3.6413&-3.6413&-3.6413&-3.6413&-3.6413\tabularnewline
$(P^{fl}_n-P^{fl}_{BS}-P_1/\sqrt{n})n$
                      &3.0079&1.9330&1.9554&3.1721&2.3143\tabularnewline
$P_2$                 &3.0623&1.9703&1.9577&3.1807&2.3166\tabularnewline
\hline
\end{tabular}
\end{table}


\section{Conclusion}
The main goal of our paper is to derive an asymptotic expansion in
powers of $n^{-1/2}$ for the price of European lookback options with
floating strike. In order to achieve this we had to refine a discrete
model of Cheuk and Vorst. Their tree only worked for options evaluated
at emission. In our work we consider the price of the option at any
time between emission and maturity. This more general situation turned
out to require a new type of tree that mixes a partial binomial tree
with a Cheuk-Vorst tree. Counting the number of paths in this tree, we
derive a closed formula for the option price.

In order to describe the asymptotic behaviour of this formula we need
an asymptotic expansion of the binomial cumulative distribution
function with a smaller error term than known so far in the
literature. Following the work of Chang, Lin and
Palmer~\cite{chang_palmer}, \cite{lin_palmer}, we base our work on an
integral representation of the binomial cumulative distribution
function that is due to Uspensky~\cite{uspensky}.

We procure explicit formulas for the coefficients of $n^{-1/2}$ and
$n^{-1}$ in the asymptotic expansion, both for the call and the put,
and for any values of the parameters; in particular, we allow the spot
rate to be zero. These formulas confirm the convergence to the Black
Scholes prices. Our results are tested on random examples.

Several issues can be proposed as a follow-up to our work. One can
continue the study on lookback options; so far no asymptotic
expansions are known for the fixed-strike case. Cheuk and
Vorst~\cite{cheuk_vorst} built a one-state tree which can be the basis
for this work. However, the price deduced from their tree cannot, a
priori, be written using a binomial cumulative distribution
function. It will be necessary to provide an asymptotic expansion for
some new type of functions.

Another possibility is to look at Asian options for which the payoff
is determined by the average value of the underlying. The average can
be taken to be geometric or arithmetic. Again there are fixed-strike
and floating strike options. So far, no closed form for the price is
known in the case of arithmetic
averages~\cite{conze_viswanathan}. Thus determining the asymptotic
expansion of the price obtained by an equivalent CRR tree could allow
to find a closed-form solution also in the arithmetic case.


\appendix

\section{Some integrals}\label{annexe-fourier}

We evaluate here some integrals that are needed in the proof of
Theorem \ref{thm-general}. Recall that $R_2=-\frac{1}{2}V$ and
$\alpha= y\sqrt{V}$. The first integral is a Fourier sine transform:
\begin{equation*}
\begin{split}
\frac{1}{2\pi}\int_0^{\infty}
	e^{R_2\varphi^2}\frac{2}{\varphi}\sin(\alpha\varphi)\,\mathrm{d}\varphi
&= \frac{1}{\pi}\int_0^{\infty}
	\frac{e^{-\frac{1}{2}x^2}}{x}\sin(y x)\,\mathrm{d}x\\
&= \frac{1}{\sqrt{2\pi}}\int_0^{y} e^{-\frac{1}{2}x^2}\,\mathrm{d}x
	= \Phi(y)-\frac{1}{2}
\end{split}
\end{equation*}
with $\Phi$ the standard normal cumulative distribution function; see
\cite[pp. 128--129]{uspensky}, \cite[p. 73]{erdelyi}.

The remaining integrals are Fourier sine or Fourier cosine transforms
of the functions $x\mapsto x^m e^{-\frac{1}{2}x^2}$, $m\geq 1$. It is
well known that their values involve the Hermite polynomials $H_m$:
\[
\frac{1}{\pi}\int_0^{\infty} x^me^{-\frac{1}{2}x^2}
\left.\begin{cases}
\sin(y x)&\text{if $m$ is odd}\\
\cos(y x)&\text{if $m$ is even}
\end{cases}\right\}
\,\mathrm{d}x
= (-1)^{\lfloor m/2\rfloor} \frac{e^{-\frac{1}{2}y^2}}{\sqrt{2\pi}}\,H_m(y),
\]
see \cite[p. 15, p. 74]{erdelyi}. For $m=1$ we thus have
\begin{equation*}
\begin{split}
\frac{1}{2\pi}\int_0^{\infty}
	e^{R_2\varphi^2}J_1\varphi\,\mathrm{d}\varphi
&= \frac{1}{24V}\,\frac{1}{\pi}\int_0^{\infty}
	xe^{-\frac{1}{2}x^2}\sin(y x)\,\mathrm{d}x\\
&= \frac{1}{24V}\,\frac{e^{-\frac{1}{2}y^2}}{\sqrt{2\pi}}\,y.
\end{split}
\end{equation*}
For $m=2$,
\begin{equation*}
\begin{split}
\frac{1}{2\pi}\int_0^{\infty}
	e^{R_2\varphi^2}J_2\varphi^2\,\mathrm{d}\varphi
&= -\frac{p-q}{6\sqrt{V}}\,\frac{1}{\pi}\int_0^{\infty}
	x^2e^{-\frac{1}{2}x^2}\cos(y x)\,\mathrm{d}x\\
	&=\frac{p-q}{6\sqrt{V}}\,\frac{e^{-\frac{1}{2}y^2}}{\sqrt{2\pi}}\,(y^2-1).
\end{split}
\end{equation*}
For the following values of $m$ we obtain
\begin{equation*}
\begin{split}
\frac{1}{2\pi}\int_0^{\infty}
	e^{R_2\varphi^2}J_3\varphi^3\,\mathrm{d}\varphi
&= \frac{J_3^*}{2V^2}\,\frac{1}{\pi}\int_0^{\infty}
	x^3e^{-\frac{1}{2}x^2}\sin(y x)\,\mathrm{d}x\\
&=\frac{J_3^*}{2V^2}\,
	\frac{e^{-\frac{1}{2}y^2}}{\sqrt{2\pi}}\,y(3-y^2)
\end{split}
\end{equation*}
with $J_3^*=\frac{7}{2880}
	+\frac{1}{2}V\big(\frac{1}{6}-pq\big)$; 	
\begin{equation*}
\begin{split}
\frac{1}{2\pi}\int_0^{\infty}
	e^{R_2\varphi^2}J_4\varphi^4\,\mathrm{d}\varphi
&= \frac{J_4^*(p-q)}{2 V^{3/2}}\,\frac{1}{\pi}\int_0^{\infty}
	x^4e^{-\frac{1}{2}x^2}\cos(y x)\,\mathrm{d}x\\
	&=\frac{J_4^*(p-q)}{2V^{3/2}}\,
	\frac{e^{-\frac{1}{2}y^2}}{\sqrt{2\pi}}\,(3-6y^2+y^4)
\end{split}
\end{equation*}
with $J_4^*=-\frac{1}{72}
	+\frac{1}{60}\big(1-12pq\big)$;
\begin{equation*}
\begin{split}
\frac{1}{2\pi}\int_0^{\infty}
	e^{R_2\varphi^2}J_5\varphi^5\,\mathrm{d}\varphi
&= \frac{J_5^*}{2 V^2}\,\frac{1}{\pi}\int_0^{\infty}
	x^5e^{-\frac{1}{2}x^2}\sin(y x)\,\mathrm{d}x\\
&= \frac{J_5^*}{2V^2}
	\frac{e^{-\frac{1}{2}y^2}}{\sqrt{2\pi}}\,y(15-10y^2+y^4)
\end{split}
\end{equation*}
with $J_5^*=\frac{1}{48}\big(\frac{1}{6}-pq\big)
	-\frac{1}{3}\big(\frac{1}{120}-\frac{1}{4}pq+p^2 q^2\big)
	-\frac{1}{36}V(p-q)^2$;
\begin{equation*}
\begin{split}
\frac{1}{2\pi}\int_0^{\infty}
	e^{R_2\varphi^2}J_6\varphi^6\,\mathrm{d}\varphi
&= \frac{J_6^*(p-q)}{2V^{3/2}}\,\frac{1}{\pi}\int_0^{\infty}
	x^6e^{-\frac{1}{2}x^2}\cos(y x)\,\mathrm{d}x\\
&= \frac{J_6^*(p-q)}{2V^{3/2}}
	\frac{e^{-\frac{1}{2}y^2}}{\sqrt{2\pi}}\,(15-45y^2+15y^4-y^6)
\end{split}
\end{equation*}
with $J_6^*=-\frac{1}{12}(\frac{1}{6}-pq)$;
\begin{equation*}
\begin{split}
\frac{1}{2\pi}\int_0^{\infty}
	e^{R_2\varphi^2}J_7\varphi^7\,\mathrm{d}\varphi
&= \frac{J_7^*}{2V^2}\,\frac{1}{\pi}\int_0^{\infty}
	x^7e^{-\frac{1}{2}x^2}\sin(y x)\,\mathrm{d}x\\
&= \frac{J_7^*}{2V^2}
	\frac{e^{-\frac{1}{2}y^2}}{\sqrt{2\pi}}\,y(105-105y^2+21y^4-y^6)
\end{split}
\end{equation*}
with $J_7^*=\frac{1}{16}(\frac{1}{6}-pq)^2+\frac{1}{360}(p-q)^2(1-12pq)
	-\frac{1}{864}(p-q)^2$;
\begin{equation*}
\begin{split}
\frac{1}{2\pi}\int_0^{\infty}
	e^{R_2\varphi^2}J_8\varphi^8\,\mathrm{d}\varphi
&= \frac{(p-q)^3}{1296V^{3/2}}\,\frac{1}{\pi}\int_0^{\infty}
	x^8e^{-\frac{1}{2}x^2}\cos(y x)\,\mathrm{d}x\\
&= \frac{(p-q)^3}{1296 V^{3/2}}\,
	\frac{e^{-\frac{1}{2}y^2}}{\sqrt{2\pi}}\,H_8(y)
\end{split}
\end{equation*}
with $H_8(y)=105\!-\!420y^2\!+\!210y^4\!-\!28y^6\!+\!y^8$;
\begin{equation*}
\begin{split}
\frac{1}{2\pi}\int_0^{\infty}
	e^{R_2\varphi^2}J_9\varphi^9\,\mathrm{d}\varphi
&= \frac{J_9^*(p-q)^2}{2 V^2}\,\frac{1}{\pi}\int_0^{\infty}
	x^9e^{-\frac{1}{2}x^2}\sin(y x)\,\mathrm{d}x\\
&= \frac{J_9^*(p-q)^2}{2 V^2}\,
	\frac{e^{-\frac{1}{2}y^2}}{\sqrt{2\pi}}\,H_9(y)
\end{split}
\end{equation*}
with $J_9^*=-\frac{1}{144}(\frac{1}{6}-pq)$ and
$H_9(y)=y(945\!-\!1260y^2\!+\!378y^4\!-\!36y^6\!+\!y^8)$;
\begin{equation*}
\begin{split}
\frac{1}{2\pi}\int_0^{\infty}
	e^{R_2\varphi^2}J_{11}\varphi^{11}\,\mathrm{d}\varphi
&= \frac{(p-q)^4}{31104V^2}\,\frac{1}{\pi}\int_0^{\infty}
	x^{11}e^{-\frac{1}{2}x^2}\sin(y x)\,\mathrm{d}x\\
&= -\frac{(p-q)^4}{31104 V^2}\,
	\frac{e^{-\frac{1}{2}y^2}}{\sqrt{2\pi}}\,H_{11}(y)
\end{split}
\end{equation*}
with $H_{11}(y)=y(-10395\!+\!17325y^2\!-\!6930y^4\!
+\!990y^6\!-\!55y^8\!+\!y^{10})$.


\begin{thebibliography}{99}

\bibitem{babbs}
\textsc{Babbs S.} (2000):
Binomial valuation of lookback options.
\emph{J. Econom. Dynam. Control} 24, 1499--1525.

\bibitem{BS}
\textsc{Black F.,} and \textsc{Scholes M.} (1973):
The pricing of options and corporate liabilities.
\emph{J. Political Econ.} 81, 637--654.

\bibitem{chang_palmer}
\textsc{Chang L.B.,} and \textsc{Palmer K.} (2007):
Smooth convergence in the binomial model.
\emph{Finance Stoch.} 11, 91--105.

\bibitem{cheuk_vorst}
\textsc{Cheuk T.H.F,} and \textsc{Vorst T.C.F} (1997):
Currency lookback options and observation frequency: a binomial approach.
\emph{J. Int. Money Finance} 16, 173--187.

\bibitem{conze_viswanathan}
\textsc{Conze A.,} and \textsc{Viswanathan} (1991):
European path dependent options: the case of geometric averages.
\emph{Finance} 12 (1), 7--22.

\bibitem{CRR}
\textsc{Cox J.C., Ross S.A.,} and \textsc{Rubinstein M.} (1979):
Option pricing: a simplified approach.
\emph{J. Finan. Econ.} 7, 229--263.

\bibitem{diener_diener}
\textsc{Diener F.,} and \textsc{Diener M.} (2004):
Asymptotics of the price oscillations of a European call option in a tree model.
\emph{Math. Finance} 14, 271--293.

\bibitem{erdelyi}
\textsc{Erd\'elyi A.,} \textsc{Magnus W.,} \textsc{Oberhettinger F.,} and \textsc{Tricomi F. G.} (1954):
\emph{Tables of integral transforms. Vol. I.},
New York-Toronto-London: McGraw-Hill Book Company, Inc.

\bibitem{follmer_schied}
\textsc{Föllmer H.,} and \textsc{Schied A.} (2002):
\emph{Stochastic finance: an introduction in discrete time},
Berlin: Walter de Gruyter.

\bibitem{gatto_goldman_sosin}
\textsc{Goldman M.B.,} \textsc{Sosin H.B.,} and \textsc{Gatto M.A.} (1979):
Path dependent options: ``buy at the low, sell at the high''.
\emph{J. Finance} 34, 1111--1127.

\bibitem{heuw1}
\textsc{Heuwelyckx F.} (2014):
Convergence of European lookback options with floating strike in the
binomial model.
\emph{Int. J. Theor. Appl. Finance} 17, 1450025.

\bibitem{hull}
\textsc{Hull J.C.,} (2014):
\emph{Options, futures, and other derivatives.} Ninth edition.
Boston: Prentice Hall.

\bibitem{hull_white}
\textsc{Hull J.C.,} and \textsc{White A.D.} (1993):
Efficient procedures for valuing European and American path-dependent options.
\emph{J. Derivat.} 1, 21--31.

\bibitem{lin_palmer}
\textsc{Lin J.,} and \textsc{Palmer K.} (2013):
Convergence of barrier option prices in the binomial model.
\emph{Math. Finance} 23, 318--338.

\bibitem{merton}
\textsc{Merton R.C.} (1973):
Theory of rational option pricing.
\emph{Bell J. Econ. Managem. Sci.} 4, 141--183.

\bibitem{uspensky}
\textsc{Uspensky J.V.} (1937):
\emph{Introduction to mathematical probability,}
New York: McGraw-Hill.
\end{thebibliography}
\end{document}